\newcommand{\D}{\displaystyle}
\newcommand{\R}{{\mathbb R}}
\newcommand{\Z}{{\mathbb Z}}
\newcommand{\N}{{\mathbb N}}
\newcommand{\C}{{\mathbb C}}
\newcommand{\re}{{\rm Re}\,}
\newcommand{\im}{{\rm Im}\,}
\theoremstyle{plain}
\newtheorem{theorem}{Theorem}[section]
\newtheorem{lemma}{Lemma}[section]
\newtheorem{proposition}{Proposition}[section]
\theoremstyle{definition}
\newtheorem{remark}{Remark}[section]
\numberwithin{equation}{section}
\title{Semiclassical asymptotics of meromorphic solutions 
of difference equations}
\author{Alexander Fedotov and Fr{\'e}d{\'e}ric Klopp}
\address[Alexander Fedotov]{St. Petersburg State University, 
7/9 Universitetskaya nab., St.Petersburg, 199034, Russia}
\email{
  {a.fedotov@spbu.ru}}
\address[Fr{\'e}d{\'e}ric Klopp]{Sorbonne Universit{\'e}, Universit{\'e} Paris Diderot,
  CNRS, Institut de Math{\'e}matiques de Jussieu - Paris Rive Gauche ,
  F-75005, Paris, France}
\email{
  {frederic.klopp@imj-prg.fr}}
\keywords{Difference equations, meromorphic  coefficients, 
semiclassical asymptotics, complex WKB method}
\thanks{The work was supported by PRC Russie CNRS, France, and the
  Russian Foundation for Basic Research under the grant No
  17-51-150008.}
\begin{document}
\begin{abstract}
  We consider the difference Schr{\"o}dinger equation
  $\psi(z+h)+\psi(z-h)+ v(z)\psi(z)=E\psi(z)$ where $z$ is a complex 
  variable, $E$ is a spectral parameter, and $h$ is a small positive 
  parameter. If the potential $v$ is an analytic
  function, then, for $h$ sufficiently small, the analytic solutions
  to this equation have standard semi-classical behavior that can be
  described by means of an analog of the complex WKB method for
  differential equations.  In the present paper, we assume that $v$
  has a simple pole and, in its neighborhood, we study the asymptotics
  of meromorphic solutions to the difference Schr{\"o}dinger equation.
\end{abstract}
\maketitle
\section{Introduction}
We study the difference Schr{\"o}dinger equation
\begin{equation}\label{main}
  \psi(z+h)+\psi(z-h)+v(z)\psi(z)=E\psi,
\end{equation}
where $z$ is a complex variable, $v$ is a given meromorphic function
called {\it potential}, $E$ is a {\it spectral} parameter, and $h$ is a small 
positive {\it shift} parameter \footnote{The results were announced 
in the conference proceedings~\cite{F-K:18a}}.
\\
Instead of~\eqref{main}, one often considers equations of the form
\begin{equation}\label{main:1}
  \phi_{k+1}+\phi_{k-1}+v(kh+\theta)\phi_k=E\phi_k, 
\end{equation}
where $k\in\Z$ is an integer variable, and $\theta\in\R $ is a
parameter. There is a simple relation between~\eqref{main}
and~\eqref{main:1}: if $\psi$ is a solution to~\eqref{main}, then the
formula $\phi_k=\psi(kh+\theta)$ yields a solution
to~\eqref{main:1}. We note that, when $h$ is small, the coefficient in
front of $\phi_k$ in~\eqref{main:1} varies slowly in $k$.
\\
Formally,
$\psi(z+h)=\sum_{l=0}^\infty \frac{h^l}{l!}
\frac{d^l\psi}{dz^l}(z)=e^{h\frac{d}{dz}}\psi(z)$; thus,
in~\eqref{main} $h$ is a small parameter in front of the
derivative. So, $h$ is a standard semiclassical parameter.
\\
The semi-classical asymptotics of solutions to ordinary differential
equations, e.g., the Schr{\"o}dinger equation
\begin{equation}\label{differential-eq}
  -h^2\frac{d^2\psi}{dz^2}(z)+v(z)\psi(z)=E\psi(z),
\end{equation}
are described by means of the well-known WKB method (called so after
G. Wentzel, H. Kramers and L. Brillouin).  There is a huge literature
devoted to this method and its applications.  If $v$
in~\eqref{differential-eq} is analytic, one uses the variant often
called the complex WKB method (see, e.g.,~\cite{Wa:87, Fe:93}). This
powerful and classical asymptotic method is used to study solutions
to~\eqref{differential-eq} on the complex plane. Even when studying
this equation on the real line, the complex WKB method is used to
compute exponentially small quantities (such as the overbarier
tunneling coefficient or the exponentially small lengths of spectral
gaps in the case of a periodic $v$) or to simplify the asymptotic
analysis (e.g. by going to the complex plane to avoid turning points),
see, e.g.,~\cite{Fe:93}.  The case of meromorphic coefficients is a
classical topic in the complex WKB theory. The analog of the complex
WKB method for difference equations is being developed
in~\cite{B-F:94, F-Shch:18, F-K:18} and in the present paper, where we
turn to meromorphic solutions to~\cref{main}.
\\
Difference equations~\eqref{main} on $\R$ or on $\C$
and~\eqref{main:1} on $\Z$ with a small $h$ arise in many fields of
mathematics and physics.  In quantum physics, for example, one
encounters such equations when studying in various asymptotic
situations an electron in a two-dimensional crystal submitted to a
constant magnetic field (see, e.g.~\cite{Wi} and references
therein). The electron is described by a magnetic Schr{\"o}dinger operator
with a periodic electric potential.  And, for example, in the
semi-classical limit, in certain cases its analysis asymptotically
reduces to analyzing an $h$-pseudo-differential operator with the
symbol $H(x,p)= 2\cos p+2\cos x$ (see~\cite{H-S:88}). Its
eigenfunctions satisfy \cref{main} with $v(z)=2\cos z$. The parameter
$h$ is proportional to the magnetic flux through the periodicity cell,
and the case when $h$ is small is a natural one.  The reader can find
more references and examples in~\cite{GHT:89}.  We add only that
\cref{main:1} for $v(z)=\lambda \cos(2\pi z)$, $\lambda$ being a
coupling constant, is the famous almost Mathieu equation (see,
e.g.,~\cite{A-J:09}), and, for $v(z)=\lambda \cot(\pi z)$, it is the
well known Maryland equation (Maryland model) introduced by
specialists in solid state physics in~\cite{GFP}.
\\
Difference equations in the complex plane (with analytic or
meromorphic coefficients) arise in many other fields of mathematics
and physics, in particular, in the study of the diffraction of
acoustic waves by wedges (see, e.g.,~\cite{B-L-G:2008}) or in the
theory of differential quasi-periodic equations (see,
e.g.,~\cite{F-K:02}). Small shift parameters arise in the problem of
diffraction by thin wedges (the shift parameter is proportional to the
angle of the wedge (see~\cite{B-L-G:2008})) and for quasi-periodic
equations with two periods of small ratio (the shift parameter is
proportional to this ratio (see, e.g.,~\cite{F-K:02})). The
semi-classical analysis of difference equations is also used to study
the asymptotics of orthogonal polynomials (see,
e.g.,~\cite{G-at-al,Dobro,Wo:03}).
\\
Even when studying~\eqref{main:1} on $\Z$, it is quite natural to pass
to the analysis of~\eqref{main} on $\R$ or on $\C$ as for this
equation one can fruitfully use numerous analytic ideas developed in
the theory of differential equations, e.g., tools of the theory of
pseudo-differential operators and of the complex WKB method. If the
coefficient $v$ is periodic, for equation~\eqref{main} one can use
ideas of the Floquet theory for differential equations with periodic
coefficients, which leads to a natural renormalization method
(see~\cite{F:13}).
\\
B. Helffer and J. Sj{\"o}strand (e.g., in~\cite{H-S:88}) and V. Buslaev
and A. Fedotov (see, e.g.~\cite{F:13}) studied the cantorian
geometrical structure of the spectrum of the Harper operator in the
semiclassical approximation.  Therefore, V. Buslaev and A. Fedotov
began to develop the complex WKB method for difference equations
in~\cite{B-F:94}. We are going to use the results of the present paper
to study in the semiclassical approximation the multiscale structure
of the (generalized) eigenfunctions of the Maryland operator (by means
of the renormalization method described in~\cite{F-S:15}). In the
``anti-semiclassical'' case, for the almost Mathieu operator such a
problem was solved in~\cite{J-W}.
\\
In this paper, for small $h$, we describe uniform asymptotics of
meromorphic solutions to~\cref{main} near a simple pole of $v$.  In
the case of a differential equation, say,~\cref{differential-eq} with
a meromorphic $v$, the solutions may have singularities (branch points
or isolated singular points) only at poles of $v$. In the case
of~\cref{main}, the behavior of its solutions is completely different.
\\
Let $d_x>0$, $d_y>0$, and
$S=\{z\in\C\,:\,|\re z|<d_x, \,|\im z|<d_y\}$.  We assume that $v$ is
analytic in $S\setminus\{0\}$ and has a simple pole at zero. Let
$\psi$ be a solution to~\cref{main} that is analytic in
$\{z\in S\,:\,\re z<0\}$. \Cref{main} implies that
$\psi(z)=-\psi(z-2h)-v(z-h)\psi(z-h)$. Therefore, for sufficiently
small $h$, \ $\psi$ can be meromorphically continued into $S$. It may
have poles at the points of $h\N$.  When $h$ becomes small, these
points become close one to another. We describe the semi-classical
asymptotics of such meromorphic solutions in $S$.\\
Below, unless stated otherwise, the estimates of the error terms in
the asymptotic formulas are locally uniform for $z$ in the domain
that we consider (i.e., uniform on any given compact subset of such 
a domain).\\
Instead of saying that an asymptotic representation is valid 
for sufficiently small $h$, we write that it is valid as $h\to0$.\\
In the sequel, we shall not distinguish between a meromorphic
function and its meromorphic continuation to a larger domain.\\
We also use the notations
$\R_\pm=\{z\in\C\,:\, \im z=0, \ \pm\re z\ge 0\}$,
$\R_\pm^*=\R_\pm\setminus{0}$ and $\C_\pm=\{z\in\C\,:\, \pm\im z>0\}$.
\section{Main results}
\subsection{The complex WKB method in a nutshell}
\label{intro:WKB}
Formally~\cref{main} can be written in the form
\begin{gather}\label{differential}
  (2\cos\hat p+v(z))\psi(z)=0,\quad \hat p=-ih\frac{d}{dz}\;.
\end{gather}
One of the main objects of the complex WKB method is {\it the complex
  momentum} $p$ defined by the formula
\begin{equation}\label{eq:p}
  2\cos\,p(z)+v(z)=0.
\end{equation}
It is an analytic multivalued function. Its branch points are
solutions to $v(z)=\pm 2$. The points where $v(z)=\pm 2$ 
are called {\it turning points}. A subset $D$ of the domain 
of analyticity of $v$ is {\it regular} if it contains no turning 
points.\\
Let $D$ be a regular simply connected domain, and $p_0$ be a branch of
the complex momentum analytic in $D$. Any other branch of the complex
momentum that is analytic in $D$ is of the form $s p_0+2\pi m$ for
some $s\in\{\pm 1\}$ and $m\in\Z$.\\
In terms of the complex momentum, one defines {\it canonical} domains.
The precise definition of a canonical domain can be found
in~\cref{ss:basics-of-WKB}. Here, we note only that the canonical domains
are regular and simply connected and that any
regular point is contained in a canonical domain (independent of $h$).\\
One of the basic results of the complex WKB method is
\begin{theorem}[\cite{F-Sh:17}]
  \label{th:wkb_main}
  Let $K\subset \mathbb C$ be a bounded canonical domain; let $p$ be a branch
  of the complex momentum analytic in $K$ and pick $z_0\in K$. For
  sufficiently small $h$, there exists $\psi$, a solution
  to~\cref{main} analytic in $K$ and such that, in $K$, one has
  \begin{equation}
    \label{standard_asymptotic}
    \psi(z)=\frac1{\sqrt{\sin p(z)}}\, e^{\frac{i}{h}\int_{z_0}^z
      p(\zeta)\,d\zeta+o(1)},\qquad h\to 0, 
  \end{equation}
  where $z\mapsto\sqrt{\sin(p(z))}$ is analytic in $K$.
\end{theorem}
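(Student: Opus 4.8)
The plan is to construct the solution $\psi$ by the standard ansatz-plus-fixed-point scheme adapted to difference equations. First I would introduce the WKB ansatz $\psi(z) = A(z)\, e^{\frac{i}{h}\int_{z_0}^z p(\zeta)\,d\zeta}$ with the amplitude $A(z) = 1/\sqrt{\sin p(z)}$ dictated by the transport equation, and substitute it into \cref{main}. Writing $\psi(z\pm h) = A(z\pm h)\,e^{\frac{i}{h}\int_{z_0}^{z}p\,d\zeta}\, e^{\pm\frac{i}{h}\int_z^{z\pm h} p\,d\zeta}$ and Taylor-expanding the exponents, the factor $e^{\pm\frac{i}{h}\int_z^{z\pm h}p} = e^{\pm i p(z) + O(h)}$, so the leading-order cancellation $2\cos p(z) + v(z) = 0$ is exactly \cref{eq:p}; the next order, governing $A$, yields $A'/A = -\frac12 (\cos p)'/\sin p$, i.e. $A = (\sin p)^{-1/2}$ up to a constant. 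Here I would use that $K$ is regular, so $\sin p$ does not vanish in $K$ and the branch of $\sqrt{\sin p}$ is well defined and analytic.

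Next, rather than building a formal series to all orders, I would set $\psi(z) = A(z)\, e^{\frac{i}{h}\int_{z_0}^z p\,d\zeta}\,(1 + r(z))$ and derive a linear equation for the correction $r$ of the form $r = \mathcal{L}_h r + g_h$, where $g_h = O(h)$ comes from the residual of the leading ansatz and $\mathcal{L}_h$ is an integral-type operator whose kernel involves the exponential weights $e^{\frac{i}{h}\int}$. This is where the geometry of a \emph{canonical} domain is essential: by the defining property of canonical domains (see \cref{ss:basics-of-WKB}), one can choose, through each point of $K$, curves along which $\im\int p\,d\zeta$ is monotone in a suitable sense (the curves of "steepest descent" / the two families foliating the canonical domain). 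Integrating the recurrence along such curves makes the weights $e^{\pm\frac{i}{h}\int}$ non-increasing, so that $\mathcal{L}_h$ is a contraction on a space of bounded analytic functions on $K$ with norm $O(h)$ (or at worst $O(1)$ small). A Neumann series / Banach fixed point argument then produces $r$ analytic in $K$ with $\sup_K |r| = o(1)$ as $h\to 0$, which is exactly the $o(1)$ in the exponent of \cref{standard_asymptotic} after writing $1 + r = e^{\log(1+r)} = e^{o(1)}$.

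The main obstacle is precisely the construction and estimation of the operator $\mathcal{L}_h$ and the proof that it contracts: one must translate the analytic definition of a canonical domain into concrete monotonicity properties of $\re\left(\frac{i}{h}\int p\,d\zeta\right)$ along an explicit pair of foliations of $K$, handle the fact that the difference equation couples the values of $\psi$ at $z$, $z+h$, $z-h$ (so the "integration" is really a summation along a lattice of step $h$ that must be compared with a genuine contour integral), and control the amplitude factor $A(z\pm h)/A(z) = 1 + O(h)$ uniformly, including near $\partial K$. Since $K$ is bounded and canonical (hence regular), $\sin p$ is bounded away from $0$ and $p$ is bounded, so all these quantities are uniformly controlled; the delicate point is only the uniformity of the contraction estimate up to the boundary and the passage from lattice sums to contour integrals, which is the technical heart of the complex WKB method for difference equations and is carried out in~\cite{F-Sh:17}. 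I would therefore structure the proof as: (i) reduce to the fixed-point equation for $r$; (ii) recall/establish the monotonicity of the WKB phase on a canonical domain; (iii) prove the contraction estimate; (iv) conclude analyticity and the $o(1)$ bound, and finally rewrite the result in the form \cref{standard_asymptotic}.
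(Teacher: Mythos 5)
The paper does not prove this theorem: it is imported verbatim from \cite{F-Sh:17} (see also \cite{B-F:94} for the Harper equation), so there is no internal proof to compare your argument against. Your outline is a reasonable reconstruction of the strategy of the cited work: the leading order of the ansatz reproduces the eikonal equation \eqref{eq:p}, the next order fixes the amplitude, and the correction term is produced by a successive-approximation scheme whose convergence is exactly what the two inequalities in \eqref{def:can} are designed to guarantee --- they make the two competing exponential weights monotone along the canonical curves through each point, which is what allows one to sum the recurrence and obtain a contraction.

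Two remarks. First, your transport equation contains a slip: substituting $\psi=A\,e^{\frac ih\int p}$ and expanding $\frac ih\int_z^{z\pm h}p\,d\zeta=\pm ip(z)+\frac{ih}{2}p'(z)+O(h^2)$ gives, at order $h$, the relation $2A'\sin p+Ap'\cos p=0$, i.e. $A'/A=-\tfrac12\,(\sin p)'/\sin p$ and not $-\tfrac12\,(\cos p)'/\sin p$ (the latter would yield $A=e^{p/2}$); the conclusion $A=(\sin p)^{-1/2}$ is of course the correct one. Second, you explicitly defer the contraction estimate --- the comparison of lattice sums of step $h$ with contour integrals and the uniformity of the bounds up to $\partial K$ --- to \cite{F-Sh:17}. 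Since that estimate is the entire substance of the theorem, what you have written is an honest outline rather than a proof; but it is no less complete than the paper's own treatment, which consists of the citation alone.
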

\begin{remark} \label{rem:sin} The function $\sin p$ does not vanish
  in regular domains; indeed, $\sin p$ only vanishes at the points
  where $v(z)=-2\cos p(z)\in\{\pm 2\}$, i.e., at the turning points.
\end{remark} 
\noindent In the case of the Harper equation (for unbounded canonical
domains), the analog of Theorem~\ref{th:wkb_main} was proved
in~\cite{B-F:94}.\\
Let us underline that the branch $p$ of the complex momentum in
Theorem~\ref{th:wkb_main} need not be the one with respect to which
$K$ is canonical. 
\subsection{Asymptotics of a meromorphic solution}
\label{ss:main}
Let us turn to the problem discussed in the present paper. Recall that
0 is a simple pole of $v$. Since $v(z)\to\infty$ as $z\to0$, reducing
somewhat $d_x$ and $d_y$ if necessary, we can and do assume that the
set $S\setminus \{0\}$ is regular and that the imaginary part of the
complex momentum does not vanish there.
\subsubsection{The solution we study}
\label{sss:main}
Let $S'=S\setminus{\mathbb R}_+$. In $S'$, fix an analytic branch $p$
of the complex momentum satisfying $\im p(z)<0$.\\
Pick a point $z_0$ in $S'\cap \R_-$. As this point is regular, there
exists a solution $\psi$ to~\cref{main} that is analytic in a
neighborhood of $z_0$ independent of $h$ and that admits the
asymptotic representation~\eqref{standard_asymptotic} in this
neighborhood.\\
Adjusting $d_x$ and $d_y$ if necessary, we can and do assume that
there exists $c\in (0, d_x)$ such that $\psi$ is analytic and admits the
asymptotic representation~\eqref{standard_asymptotic} in the domain
$S_c=\{z\in S\,:\, \re z<-c\}$.\\
As $\im p(z)<0$ in $S'$, the expression
$z\mapsto \left| e^{\frac{i}{h}\int_{z_0}^z p(\zeta)\,d\zeta}\right|$
(compare with the leading term in~\eqref{standard_asymptotic})
increases as $z$ in $S'$ moves to the right parallel to $\R$.\\
If $h$ is sufficiently small, the solution $\psi$ is meromorphic in
$S$; its poles belong to $h\N$ and they are simple.
\subsubsection{The uniform asymptotics of $\psi$ in $S$}
\label{sec:asympt-psi-whole}
To describe the asymptotics of $\psi$, we define an auxiliary
function. Clearly, the complex momentum $p$ has a logarithmic branch
point at zero. In $\C\setminus\R_+$, we fix the analytic branch of
$z\mapsto \ln(-z)$ such that $\ln(-z)|_{z=-1}=0$.
In~\cref{sec:case-where-im-1} we check
\begin{lemma}
  \label{le:p-ln}
  The function $z\mapsto p(z)-i\ln (-z)$ is analytic in $S$.  The
  function $z\mapsto z\sin p(z)$ is analytic and does not vanish in
  $S$.
\end{lemma}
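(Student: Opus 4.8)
The plan is to analyze the local behavior of the complex momentum $p$ near the simple pole of $v$ at zero. Since $v$ has a simple pole at $0$, we can write $v(z) = a/z + g(z)$ with $a \neq 0$ and $g$ analytic in $S$. The defining relation $2\cos p(z) + v(z) = 0$ forces $\cos p(z) \to \infty$ as $z \to 0$, so $p(z) \to \infty$ along a direction in the complex plane. First I would use the fact that for large $|w|$, the equation $2\cos w = -v$ is solved by $w = \pm i \ln(-v) + \text{(corrections)}$; more precisely, writing $2\cos w = e^{iw} + e^{-iw}$, if $\im w \to -\infty$ then $e^{-iw}$ dominates and $e^{-iw} \approx -v(z) \approx -a/z$, so $-iw \approx \ln(-a/z) = \ln(-z^{-1}) + \ln a$ (choosing branches), giving $p(z) = iw \approx i\ln(-z) - i\ln a + o(1)$. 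The branch chosen in the paper satisfies $\im p < 0$ in $S'$, which is consistent with $\im(i\ln(-z)) = \re\ln(-z) \to +\infty$... so I must be careful: actually $\im p(z) < 0$ near $0$ requires $e^{iw}$ (not $e^{-iw}$) to be the dominant exponential, and one should track which branch the normalization in $S' = S \setminus \R_+$ with $\ln(-z)|_{z=-1}=0$ selects.

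The key computational step is to invert the relation $2\cos p + v = 0$ rigorously near $z = 0$. I would set $u(z) = p(z) - i\ln(-z)$ (the candidate analytic function) and show it extends analytically across $z = 0$. Substituting $p = i\ln(-z) + u$ into $e^{ip} + e^{-ip} = -v(z)$: one of $e^{ip} = e^{i(i\ln(-z)+u)} = (-z)^{-1}e^{iu} \cdot(\text{const})$ or $e^{-ip} = (-z) e^{-iu}\cdot(\text{const})$ — depending on sign conventions — equals the singular part. Matching the $1/z$ singularity pins down $e^{iu}$ (or $e^{-iu}$) at $z=0$ up to the residue $a$, and then the remaining equation for $u$ becomes, after multiplying through by $z$, a relation of the form $F(z, u) = 0$ where $F$ is analytic in both variables near $(0, u_0)$ with $\partial_u F(0, u_0) \neq 0$; the holomorphic implicit function theorem then yields $u$ analytic near $0$. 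Combined with analyticity of $p$ (hence of $u = p - i\ln(-z)$) in $S \setminus \{0\}$, this gives analyticity in all of $S$, proving the first assertion.

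For the second assertion, that $z \mapsto z\sin p(z)$ is analytic and nonvanishing in $S$: away from $0$, $\sin p$ is analytic and nonvanishing since $S \setminus \{0\}$ is regular (Remark~\ref{rem:sin}), so $z\sin p$ is analytic and nonvanishing there. Near $0$, I would use the first part: $\sin p(z) = \sin(i\ln(-z) + u(z))$ and expand using the addition formula, $\sin p = \sin(i\ln(-z))\cos u + \cos(i\ln(-z))\sin u$. Now $\sin(i\ln(-z)) = \tfrac{1}{2i}(e^{i \cdot i\ln(-z)} - e^{-i\cdot i \ln(-z)}) = \tfrac{1}{2i}((-z)^{-1} - (-z))$ and similarly $\cos(i\ln(-z)) = \tfrac12((-z)^{-1}+(-z))$, so $\sin p(z) = \tfrac{1}{2i}(-z)^{-1}(e^{iu} \cdot \text{stuff})$ — in any case $\sin p(z)$ has at worst a simple pole at $0$ with the leading coefficient governed by $e^{\pm i u(0)}$, which is a nonzero constant (tied to the residue $a$). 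Hence $z\sin p(z)$ extends analytically to $0$ with nonzero value there, completing the proof.

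The main obstacle will be bookkeeping of branches and signs: getting straight which exponential $e^{ip}$ or $e^{-ip}$ dominates as $z\to 0$ given the specified branch of $p$ (the one with $\im p < 0$ in $S'$) and the specified branch of $\ln(-z)$ (normalized at $z = -1$), and confirming these are compatible so that $p - i\ln(-z)$ — rather than $p + i\ln(-z)$ — is the combination that is regular. I expect this amounts to checking, on the segment $(-d_x, 0) \subset \R_-$, that as $z \to 0^-$ one has $\im p(z) \to -\infty$ matching $\im(i\ln(-z)) = \re\ln(-z) = \ln|z| \to -\infty$, which fixes the sign once and for all; the remaining estimates (controlling the $o(1)$ corrections, applying the implicit function theorem) are routine.
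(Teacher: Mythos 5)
Your core computation is the same as the paper's: both proofs reduce to showing that near $0$ the chosen branch satisfies $e^{ip(z)}=-v(z)+\tilde g(z)$ with $\tilde g$ analytic and vanishing at $0$ (the paper gets this by solving the quadratic $e^{2ip}+ve^{ip}+1=0$ explicitly and fixing the root by the condition $e^{ip}\to\infty$, which follows from $\im p<0$; your holomorphic implicit function theorem for $F(z,u)=0$ is a cosmetic variant), and then both read off that $z\sin p(z)=\tfrac{z}{2i}\bigl(e^{ip}-e^{-ip}\bigr)\to -v_{-1}/(2i)\neq 0$. Your sign bookkeeping, once you self-correct ($\im(i\ln(-z))=\ln|z|\to-\infty$), is also the right way to pin down that it is $p-i\ln(-z)$ and not $p+i\ln(-z)$ that is regular.

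There is, however, one misstep in your gluing argument: you invoke ``analyticity of $p$ (hence of $u=p-i\ln(-z)$) in $S\setminus\{0\}$.'' The branch $p$ is only defined in $S'=S\setminus\R_+$ and has a logarithmic branch point at $0$ (the paper says so explicitly), so it is \emph{not} a single-valued analytic function on $S\setminus\{0\}$, and neither is $\ln(-z)$; your local analysis only gives analyticity of $u$ on $S'\cup D$ for a small disk $D$ about $0$, which still omits $\R_+$ outside $D$. The missing (short) step is the paper's monodromy argument: continue $u$ analytically across $\R_+$ once from $S'\cap\C_+$ and once from $S'\cap\C_-$; these two continuations coincide on $D$ by your local result, hence coincide on the whole connected neighborhood of $\R_+$ by the identity theorem, so $u$ is single-valued and analytic on all of $S$. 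With that sentence added, your proof is complete and matches the paper's.
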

\noindent For $z\in S'$, we set
\begin{equation}
  \label{eq:G0}
  G_0(z)=\frac{\sqrt{h/2\pi}}{\sqrt{- z\sin p(z)}} \,  
  e^{\textstyle\ \frac{z}{h}\ln\frac1h+
    \frac{i}h\int_0^z(p(\zeta)-i\ln(-\zeta))\,d\zeta}.
\end{equation}
Here and below, $\sqrt{h/2\pi}$ and $\ln\frac1h$ are positive;
$\sqrt{- z\sin p(z)}= \sqrt{-z}\sqrt{\sin p(z)}$; the branch of 
$z\mapsto\sqrt{\sin p(z)}$ coincides with the one
from~\eqref{standard_asymptotic}.\\
In view of \Cref{le:p-ln}, $G_0$ is analytic in $S$.\\
Our main result is
\begin{theorem}
  \label{th:main}
  In $S$, the solution $\psi$ admits the asymptotic representation
  \begin{equation}\label{eq:psi-gamma}
    \psi(z)=\Gamma\,\left(1-\frac{z}h\right)\,G_0(z)\,
    e^{\textstyle\frac{i}h\int_{z_0}^0p\,dz+o(1)},\quad\text{ as
    }h\to 0
  \end{equation}
  where $\Gamma$ is the Euler $\Gamma$-function and the integration
  path stays in $S$.
\end{theorem}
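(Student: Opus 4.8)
The plan is to connect the known WKB asymptotics of $\psi$ in the left domain $S_c$ to its behavior across the pole at $0$ by exploiting the recursion $\psi(z)=-\psi(z-2h)-v(z-h)\psi(z-h)$, which propagates $\psi$ one step $h$ to the right at a time. First I would verify the claim about the structure of $G_0$: using \Cref{le:p-ln}, write $\frac ih\int_0^z p\,d\zeta = \frac ih\int_0^z(p-i\ln(-\zeta))\,d\zeta - \frac1h\int_0^z\ln(-\zeta)\,d\zeta$, and compute $\int_0^z\ln(-\zeta)\,d\zeta = z\ln(-z)-z$. Feeding this into the "naive" WKB ansatz $\frac1{\sqrt{\sin p}}e^{\frac ih\int_{z_0}^z p}$ and comparing with $G_0$, one finds that $G_0$ is, up to the constant factor $e^{\frac ih\int_{z_0}^0 p}$ and up to the slowly varying prefactor, exactly the WKB solution "regularized at the pole": the singular part $e^{-\frac1h(z\ln(-z)-z)}$ of the exponential has been pulled out and combined with $e^{\frac zh\ln\frac1h}$, and $\sqrt{\sin p}$ has been replaced by $\sqrt{-z\sin p}/\sqrt{-z}$ with the harmless factor $\sqrt{-z}$ absorbed. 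So in $S_c$, where $z/h$ is bounded away from $h\N$ and large in modulus, Stirling's formula $\Gamma(1-z/h)\sim\sqrt{2\pi}\,(-z/h)^{-z/h}e^{z/h}/\sqrt{-z/h}$ should show that the right-hand side of \eqref{eq:psi-gamma} reduces to $\frac1{\sqrt{\sin p(z)}}e^{\frac ih\int_{z_0}^z p + o(1)}$, i.e., it agrees with \eqref{standard_asymptotic}. This establishes the formula on $S_c$.

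Next I would set $u(z)=\psi(z)\big/\bigl[\Gamma(1-z/h)G_0(z)e^{\frac ih\int_{z_0}^0 p}\bigr]$ and show $u(z)=1+o(1)$ on all of $S$. The denominator is a solution-like object: I would compute exactly how $\Gamma(1-z/h)G_0(z)$ transforms under $z\mapsto z\pm h$. The functional equation $\Gamma(1-(z-h)/h)=\Gamma(2-z/h)=(1-z/h)\Gamma(1-z/h)$ contributes the factor $(1-z/h)=-(z-h)/h$, which is precisely what is needed to reproduce the pole structure of $v(z-h)\sim \text{(residue)}/(z-h)$ at the lattice points; and the exponential/prefactor part of $G_0$ evolves according to the WKB transfer relations, i.e., $G_0(z\pm h)/G_0(z) = e^{\pm ip(z)/h}\cdot(1+o(1))$ modulo the logarithmic corrections that are exactly cancelled by the $\ln\frac1h$ term and the $\Gamma$-factor. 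Substituting $\psi=u\cdot[\Gamma(1-z/h)G_0 e^{\cdots}]$ into the three-term recursion and dividing out the denominator should yield a recursion of the form $u(z)=a(z)u(z-h)+b(z)u(z-2h)$ with $a(z)+b(z)=1+o(1)$ and $a,b$ bounded — a "near-identity" transfer relation — valid uniformly on compact subsets of $S$. Since $u=1+o(1)$ on $S_c$ (the initialization region, which has width $\geq h$ in the $z$-direction so it contains two consecutive lattice translates), iterating this recursion at most $O(d_x/h)$ times and controlling the accumulation of the $o(1)$ errors propagates $u=1+o(1)$ across $S$.

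The main obstacle is the error accumulation in this last step: a single step of the recursion is only accurate to $o(1)$, but one must take $\sim d_x/h$ steps, so naively the errors could blow up. The resolution, which is the technical heart of the complex WKB method, is that the relevant object is not $\psi$ itself but the ratio $u$, and the transfer coefficients $a,b$ for $u$ differ from the exact WKB transfer coefficients only by quantities that are $o(1)$ *uniformly and summably* — more precisely, one should set up a discrete Gronwall/contraction argument on the sequence $(u(z-kh))_k$ showing that the product of the one-step error factors stays $1+o(1)$ because, after the $\Gamma$- and $G_0$-normalizations, the "defect" $a(z)+b(z)-1$ is $O(h)$ rather than merely $o(1)$ at each regular point, so the total defect over $O(1/h)$ steps is $O(1)$ and in fact $o(1)$ after a more careful bookkeeping using that $\psi$ is an exact solution (so the defects are not arbitrary but come from the remainder in \eqref{standard_asymptotic}). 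A secondary subtlety is handling the points of $h\N$ themselves, where $\psi$ has poles: there I would either work with $u$, which is designed to be regular and nonvanishing there (the zeros of $1/\Gamma(1-z/h)$ at $z\in h\N$ match the poles of $\psi$), or first prove the estimate on $S\setminus h\N$ and then invoke analyticity of $u$ together with a maximum-principle or Cauchy-estimate argument to extend it across the lattice points.
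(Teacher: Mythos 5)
Your setup is right in two respects: the Stirling computation showing that the right-hand side of \eqref{eq:psi-gamma} reduces to \eqref{standard_asymptotic} in $S_c$ is exactly how one checks consistency of the ansatz, and the functional equation of $\Gamma$ does reproduce the pole of $v$ in the one-step transfer. But the core of your argument --- propagating $u=\psi/[\Gamma(1-z/h)G_0\,e^{\cdots}]$ across $S$ by iterating the three-term recursion $O(d_x/h)$ times --- has a genuine gap, and you have in fact named it yourself without closing it. A per-step defect $a(z)+b(z)-1=O(h)$ gives, after $O(1/h)$ steps, a factor $e^{O(1)}$, which proves the asymptotics only up to a bounded multiplicative constant, not up to $1+o(1)$; the phrase ``in fact $o(1)$ after a more careful bookkeeping'' is precisely the statement to be proved, not an argument. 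Worse, the bookkeeping is most delicate exactly where the theorem has content: for $|z|\sim kh$ near the pole, the derivatives of $p$ blow up like $|z|^{-1},|z|^{-2}$, so the natural per-step defect is $O(h^2/|z|^2)=O(k^{-2})$, and the sum over $k$ is $O(1)$ --- borderline, not $o(1)$. Showing that the $\Gamma$-corrected ansatz beats this, uniformly down to $|z|\sim h$, is a substantial uniform two-term expansion of $\Gamma(1-(z\pm h)/h)G_0(z\pm h)$ against $E-v(z)$ that you have not carried out.

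The paper avoids this iteration entirely, and the idea it uses is the one missing from your proposal: one never needs asymptotics of the recursion near the pole. First, the standard behavior of $\psi$ is established in $S$ \emph{outside a fixed, $h$-independent disk around $0$}: in $S'$ by the continuation principle (\Cref{th:rectangle}), and in a neighborhood of $\R_+^*$ by writing $\psi=a\psi_++b\psi_-$ in terms of two local WKB solutions, computing the $h$-periodic coefficients $a,b$ via Wronskians, and controlling these $h$-periodic functions globally by the maximum principle applied in the variable $\zeta=e^{2\pi iz/h}$ on an annulus (\Cref{pro:in-out}); this is where the factor $1-e^{2\pi iz/h}$, i.e.\ the reflection formula for $\Gamma$, enters. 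Second --- and this is the decisive step --- the function $z\mapsto \psi(z)/\bigl(n_0\,G_0(z)\,\Gamma(1-z/h)\bigr)-1$ is \emph{analytic in all of $S$} (the zeros of $1/\Gamma(1-z/h)$ cancel the poles of $\psi$, and $G_0$ is analytic and zero-free by \Cref{le:p-ln}), and it is $o(1)$ on the boundary of the small disk by \Cref{pro:on-the-circle}; the maximum principle then gives $o(1)$ \emph{inside} the disk for free. I would encourage you to keep your transfer-relation computation as a consistency check, but to replace the Gronwall iteration by this two-step scheme: soft analyticity plus the maximum principle is what converts ``asymptotics away from the singularity'' into ``uniform asymptotics through it.''
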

\noindent So, the special function describing the asymptotic behavior of
$\psi$ near the poles generated by a simple pole of $v$ is
the $\Gamma$-function.
\subsubsection{The asymptotics of $\psi$ outside a neighborhood of
  $\R_+$}
\label{sec:asympt-psi-outs}
For large values of $|z/h|$, the $\Gamma$-function
in~\eqref{eq:psi-gamma} can be replaced with its asymptotics.  Let us
give more details.\\
Fix $\epsilon>0$. We recall that, uniformly in the sector
$|\arg \zeta|\le \pi -\epsilon$, one has
\begin{equation}
  \label{eq:stirling}
  \Gamma(1+\zeta)=\sqrt{2\pi\zeta}\,e^{\zeta (\ln\zeta -1)+o(1)}, \
  |\zeta|\to\infty,
\end{equation}
where the functions $\zeta\mapsto \sqrt{\zeta}$ and
$\zeta\mapsto\ln\zeta$ are analytic in this sector and satisfy the
conditions $\sqrt{1}=1$ and $\ln 1=0$.\\
Fix $\delta$ positive sufficiently small. Using~\eqref{eq:stirling},
one checks that, in $S$ outside the $\delta$-neighborhood of $\R_+$,
the representation~\eqref{eq:psi-gamma} turns
into~\eqref{standard_asymptotic}.\\
By construction, $\psi$ admits the asymptotic
representation~\eqref{standard_asymptotic} in $S_c$. \Cref{th:main}
implies that the representation remains valid in $S'$. This reflects
the standard semi-classical heuristics saying that an asymptotic
representation of a solution remains valid as long as the leading term
is increasing; in the present case, the modulus of the exponential in
the leading term in~\eqref{standard_asymptotic} increases in $S'$ as
long as $z$ moves to the right parallel to $\R$.
\subsubsection{The asymptotics of $\psi$ near $\R_+$ away from $0$}
\label{sec:asymptotics-psi-near}
Assume that $z$ is inside the $\delta$-neighborhood of $\R_+$ but
outside the $\delta$-neighborhood of 0.  In this case, to
simplify~\eqref{eq:psi-gamma}, we first use the relation
\begin{equation}
  \label{eq:two-gammas}
  \Gamma(1-\zeta)=\frac{\pi}{\sin(\pi\zeta)}\,\frac1{\Gamma(\zeta)}
\end{equation}
and, next, the asymptotic representation~\eqref{eq:stirling}. This
yields the asymptotic representation
\begin{equation}
  \label{as:in-out}
  \psi(z)=\frac{e^{\textstyle \
      \frac{i}{h}\int_{z_0}^zp(\zeta)\,d\zeta+o(1)}}{(1-e^{2\pi iz/h})\;
    \sqrt{\sin(p(z))}}\,,\quad h\to0,
\end{equation}
where $p$, $z\mapsto\int_{0}^zp(\zeta)\,d\zeta$ and $\sqrt{\sin(p)}$
are obtained by analytic continuation from $S'\cap \C_+$ into the
domain under consideration.
\subsection{A basis of solutions}
\label{intro:second-sol}
The set of solutions to~\eqref{main} is a two-dimensional module over
the ring of $h$-periodic functions (see
section~\ref{s:space-of-sol}). We now explain how to construct a basis
of this module.

\subsubsection{} As the first solution, we take
$f_+(z)=e^{-\frac{i}{h}\int_{z_0}^0p(z)\,dz}\,\psi(z)$. In $S'$, it
admits the asymptotic representation
\begin{equation}\label{eq:f-plus}
  f_+(z)\sim \frac1{\sqrt{\sin p(z)}}\, 
  e^{\textstyle \;\frac{i}{h}\int_{0}^z p(\zeta)\,d\zeta},  \qquad h\to 0,
\end{equation}
and has simple poles at the points of $h\N$.
\\
We note that $\left|e^{\frac{i}{h}\int_{0}^z p(\zeta)\,d\zeta}\right|$
increases when $z$ moves to the right parallel to $\R$.
\subsubsection{} Fix $z_1\in S\cap\R_+^*$. Possibly reducing $S$
somewhat, similarly to $\psi$, one constructs a solution $\phi$ that,
in $S\setminus \R_-$, admits the asymptotic representation
\begin{equation}
  \label{eq:phi}
  \phi(z)\sim\frac1{\sqrt{\sin(p(z))}}\, e^{-\frac{i}{h}\int_{z_1}^z
    p(\zeta)\,d\zeta},\qquad h\to 0,
\end{equation}
and has simple poles at the points of $-h\N$. The branches of the
complex momenta appearing in~\eqref{eq:f-plus} and~\eqref{eq:phi}
coincide in $\C_+$.
\\
Note that $\left|e^{-\frac{i}{h}\int_{z_1}^z p(\zeta)\,d\zeta}\right|$
increases when $z$ moves to the left parallel to $\R$.
\\
The function $z\mapsto 1-e^{2\pi iz/h}$ being $h$-periodic, we define
another solution to~\cref{main} by the formula
$f_-(z):=(1-e^{2\pi iz/h}) e^{\frac{i}{h}\int_{z_1}^0
  p(z)\,dz}\,\phi(z)$. The function $f_-$ is analytic in $S$; its
zeroes are simple and located at the points of $h\N\cup\{0\}$.  As we
prove in \cref{sec:second-solution}, in $S'$, the solution $f_-$ has
the asymptotics
\begin{equation}
  \label{eq:f-minus}
  f_-(z)\sim\frac1{\sqrt{\sin(p(z))}}\, e^{-\frac{i}{h}\int_{0}^z
    p(\zeta)\,d\zeta+o(1)},\qquad h\to 0.
\end{equation}
\subsubsection{} In section~\ref{sec:second-solution}, we shall see
that, for sufficiently small $h$, $f_+$ and $f_-$ form a basis of the
space of solutions to~\cref{main} meromorphic in $S$ (possibly reduced
somewhat).
\subsection{The idea of the proof of \Cref{th:main} and the plan of
  the paper}\label{ss:plan}
To prove Theorem~\ref{th:main}, we consider the function
$z\mapsto f(z)=\psi(z)/\Gamma(1-z/h)$.  It is analytic in $S$. Using
tools of the complex WKB method for difference equations, outside a
disk $D$ centered at 0 (and independent of $h$), we compute the
asymptotics of $f$ and obtain $f(z)=e^{\frac ih\int_{z_0}^0p\,dz}G_0(z)(1+o(1))$. 
The factor $G_0$ is analytic and does not vanish in $S$.  Therefore, the function
$z\mapsto e^{-\frac ih\int_{z_0}^0p\,dz}f(z)/G_0(z)-1$ is analytic in $D$, and, as it is small
outside $D$,  the maximum principle implies that it is small also inside $D$.\\
The plan of the paper is the following. In~\cref{s:preliminaries} we
describe basic facts on~\cref{main} and the main tools of the complex
WKB method for difference equations.  In~\cref{s:outside-as}, we
derive the asymptotics of the solution $\psi$ in $S$ outside a
neighborhood of 0. In~\cref{s:uniform-as}, we finally prove the
asymptotic representation~\eqref{eq:psi-gamma}. In~\cref{s:basis}, we
briefly discuss the solution $\phi$ mentioned
in~\cref{intro:second-sol}.
\subsection{Acknowledgments}
\label{sec:acknowledgments}
This work was supported by the CNRS and the Russian foundation of
basic research under the French-Russian grant 17-51-150008.
\section{Preliminaries}\label{s:preliminaries}
We first recall basic facts on the space of solutions to~\cref{main};
next, we recall basic constructions of the complex WKB method for
difference equations and prepare an important tool,
\Cref{th:rectangle}. We will use it various times to obtain the
asymptotics of solutions to~\eqref{main}.
\subsection{The space of solutions to~\cref{main}}
\label{s:space-of-sol}
The observations that we now discuss are well-known in the theory of
difference equations and are easily proved. We follow~\cite{F-K:18}.\\
Fix $(X_1,X_2, Y)\in {\mathbb R}^3$ so that $X_1+2h<X_2$.  We discuss the
set $\mathcal{M}$ of solutions to~\cref{main} on $I:=\{z\in \C\,:\,X_1<\re
z<X_2,\; \im z=Y\}$. \\
Let $\psi_\pm\subset \mathcal{M}$. The expression
\begin{equation}
  \label{eq:wronskian}
  w(\psi_+(z),\psi_-(z))=\psi_+(z+h)\psi_-(z)-\psi_+(z)\psi_-(z+h),
  \quad z,\,z+h\in I, 
\end{equation}
is called {\it the Wronskian} of $\psi_+$ and $\psi_-$.   It is
$h$-periodic in $z$.\\ 
If the Wronskian of $\psi_+$ and $\psi_-$ does not vanish, they form a
basis in $\mathcal{M}$, i.e, $\psi\in \mathcal{M}$ if and only if
\begin{equation}\label{eq:three-solutions}
  \psi(z)=a(z)\psi_+(z)+b(z)\psi_-(z),\quad z\in I,
\end{equation}
where $a$ and $b$ are $h$-periodic complex valued functions. One has
\begin{equation}
  \label{eq:periodic-coef}
  a(z)=\frac{w(\psi(z),\,\psi_-(z))}{w(\psi_+(z),\,\psi_-(z))}
  \quad\text{and}\quad 
  b(z)=\frac{w(\psi_+(z),\,\psi\,(z))}{w(\psi_+(z),\,\psi_-(z))}.
\end{equation} 
The set $\mathcal{M}$ is a two-dimensional module over the ring of
$h$-periodic functions.
\subsection{Basic constructions of the complex WKB method}
\label{ss:basics-of-WKB}
We begin by defining {\it canonical curves} and {\it canonical
  domains}, the main geometric objects of the method.
\subsubsection{Canonical curves}
\label{sec:canonical-curves}
For $z\in\C$, we put $x=\re z$, $y=\im z$. A connected curve
$\gamma \subset \mathbb{C}$ is called {\it vertical} if it is the graph of a
piecewise continuously differentiable function of $y$.\\
Define the complex momentum, turning points and regular domains as
in~\cref{intro:WKB}.\\
Let $\gamma$ be a regular vertical curve parameterized by $z=z(y)$, and
$p$ be a branch of the complex momentum that is analytic near $\gamma$.
We pick $z_0\in\gamma$. The curve $\gamma$ is called {\it canonical}
with respect to $p$ if, at the points where $z'(\cdot)$ exists, one
has
\begin{equation}\label{def:can}
  \frac{d}{dy} \im \int_{z_0}^zp(\zeta)\,d\zeta> 0
  \quad\text{ and }\quad 
  \frac{d}{dy} \im \int_{z_0}^z(p(\zeta)-\pi) \,d\zeta< 0,
\end{equation}
and at the points of jumps of $z'(\cdot)$, these inequalities are
satisfied for both the left and right derivatives.
\subsubsection{Canonical domains}
\label{sec:canonical-domains}
In this paper we discuss only bounded canonical domains.\\
Let $K\subset\C$ be a bounded simply connected regular domain and let
$p$ be a branch of the complex momentum analytic in it.  The domain
$K$ is said to be {\it canonical} with respect to $p$ if, on the
boundary of $K$, there are two regular points, say, $z_1$ and $z_2$
such that, for any $z\in K$, there exists a curve $\gamma\subset K$ passing
through $z$ and connecting $z_1$ to $z_2$ that is canonical with
respect to $p$.
In this paper, we use the local canonical domains described by
\begin{lemma}
  \label{le:1}
  For any regular point, there exists a canonical domain that contains
  this point.
\end{lemma}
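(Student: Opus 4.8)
\textbf{Proof proposal for \Cref{le:1}.}

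The plan is to build a canonical domain around a given regular point $z_*$ by first constructing a single canonical curve through $z_*$ and then thickening it slightly, controlling the two defining inequalities in~\eqref{def:can} by continuity and compactness. Write $p$ for a branch of the complex momentum analytic in a fixed open disk $B$ centered at $z_*$ and contained in the regular set (such a disk exists since the turning points are isolated). On $B$, both $\im\frac{d}{dy}\int p\,d\zeta = \im(p\cdot z')$ and $\im\frac{d}{dy}\int(p-\pi)\,d\zeta = \im((p-\pi)z')$ are continuous functions of $z$ and of the direction $z'$. Since $\im p(z_*) \ne 0$ (this is the key use of regularity — recall from \Cref{rem:sin} that $\sin p \ne 0$ off the turning points, hence $p$ is not real there, so $\im p\ne 0$ on $B$ after possibly shrinking $B$), the vertical direction $z' = i$ gives $\im(p\cdot i) = \re p$ and $\im((p-\pi)i) = \re p - \pi$; this is not immediately of the right sign, so instead I would choose the curve to be the graph of an affine function $z(y) = z_* + (\alpha + i)(y - \im z_*)$ with a real slope parameter $\alpha$ to be fixed. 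Then $z' = \alpha + i$ and the two quantities become $\im(p(\alpha+i)) = \alpha\,\im p + \re p$ and $\alpha\,\im p + \re p - \pi$; choosing $|\alpha|$ large with the sign of $-\,\mathrm{sgn}(\im p(z_*))$ makes $\alpha\,\im p(z_*)$ as large negative or positive as we wish — we need the first expression $>0$ and the second $<0$, i.e. $\pi - \re p < \alpha\,\im p + (\text{something})$... rather, we need $\alpha\,\im p(z_*) + \re p(z_*)$ to lie strictly between $0$ and $\pi$. Since this is a single linear condition on $\alpha$ with $\im p(z_*)\ne 0$, the set of admissible $\alpha$ is a nonempty open interval; fix one such $\alpha=\alpha_0$.

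Next, by continuity of $p$ on $B$, there is a smaller disk $B' \Subset B$ and an $\varepsilon>0$ such that for all $z \in B'$ and all slopes $\beta$ with $|\beta - \alpha_0| < \varepsilon$, the quantities $\beta\,\im p(z) + \re p(z)$ still lie strictly between $0$ and $\pi$; equivalently, every vertical line segment in $B'$ with slope in $(\alpha_0-\varepsilon, \alpha_0+\varepsilon)$, parameterized by $y$, satisfies both inequalities in~\eqref{def:can} (here the base point $z_0$ of the integral in~\eqref{def:can} is irrelevant since only the $y$-derivative enters). Now let $K$ be the open parallelogram-shaped region obtained as the union of all line segments through points of a short horizontal segment $\{z_* + t : |t| < \rho\}$ with fixed slope $\alpha_0$, truncated to lie in $B'$; concretely $K = \{z_* + t + (\alpha_0 + i)s : |t| < \rho,\ |s| < \sigma\}$ for $\rho,\sigma$ small enough that $\overline K \subset B'$. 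This $K$ is bounded, simply connected (it is a parallelogram), and regular (contained in $B' \subset B$). Its two distinguished boundary points are the endpoints $z_1 = z_* + (\alpha_0+i)\sigma$ (top) and $z_2 = z_* - (\alpha_0+i)\sigma$ (bottom) of the central fiber, which are regular; and through any $z = z_* + t_0 + (\alpha_0+i)s_0 \in K$ there passes the fiber $\{z_* + t_0 + (\alpha_0+i)s : |s| < \sigma\}$, a vertical curve of slope $\alpha_0$, which connects the top edge to the bottom edge — one then reroutes its endpoints slightly within $\overline K$ to meet $z_1$ and $z_2$ exactly, using short horizontal arcs (slope $0$, still within the admissible slope window if $\varepsilon$ was chosen $>|\alpha_0|$) or, more cleanly, redefine $z_1, z_2$ to be a common pair of opposite corners and use that the slopes needed to connect any interior point to two fixed opposite corners through a vertical curve inside $K$ stay close to $\alpha_0$ once $K$ is thin enough.

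The main obstacle is precisely this last point: ensuring that \emph{one} pair of boundary points $z_1, z_2$ works \emph{simultaneously} for all $z \in K$, rather than a pair depending on $z$. The honest way to handle it is to take $K$ to be a genuine parallelogram with sides parallel to the direction $\alpha_0 + i$ and the horizontal, let $z_1$ and $z_2$ be the two vertices \emph{not} joined by a side of slope $\alpha_0+i$ (so the "lower-left" and "upper-right", say), and check that for every $z \in K$ the polygonal path from $z_1$ up along the left slanted side, then horizontally to the fiber through $z$, then along that fiber, then horizontally to the right slanted side, then up to $z_2$ — or some monotone variant — can be taken to be a single vertical curve (graph over $y$) with piecewise-constant slope taking only the values $0$ and $\alpha_0$, both of which satisfy~\eqref{def:can} on $B'$. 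Monotonicity in $y$ of such a path requires $\alpha_0 \ne 0$ handled correctly and the parallelogram thin in the horizontal direction; shrinking $\rho$ relative to $\sigma$ achieves this. Since~\eqref{def:can} is checked slope-by-slope and both admissible slopes have been arranged to work uniformly on $B'$, the verification reduces to elementary planar geometry, and $K$ is the desired canonical domain containing $z_*$. $\qquad\blacksquare$
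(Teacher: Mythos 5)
There is a genuine gap, and it sits at the very first step. You write that, by \Cref{rem:sin}, $\sin p\ne 0$ off the turning points, ``hence $p$ is not real there, so $\im p\ne 0$ on $B$''. This implication is false: $\sin p\ne 0$ only excludes $p\in\pi\Z$, not $p\in\R$. At any regular point where $v(z)\in(-2,2)$ (for instance $v(z)=0$, giving $p=\pi/2$ and $\sin p=1$) the momentum is real, so $\im p=0$. The lemma is stated for \emph{every} regular point, and the case $\im p=0$ (the classically allowed region) is precisely one of the cases a local canonical domain must cover; indeed the two inequalities in~\eqref{def:can} are designed so that neither exponential $e^{\pm\frac ih\int p}$ dominates, which is the delicate situation when $p$ is real. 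When $\im p(z_*)=0$ your ``single linear condition on $\alpha$'' degenerates: $\alpha\,\im p+\re p=\re p$ is independent of $\alpha$, and if the branch you happened to fix has $\re p(z_*)\notin(0,\pi)$ there is \emph{no} admissible slope at all. The repair is to use the freedom in the branch: since $z_*$ is regular, $p(z_*)\notin\pi\Z$, so replacing $p$ by $sp+2\pi m$ (which the definition of a canonical domain permits, and which \Cref{th:wkb_main} explicitly does not require to match the branch used in the asymptotics) one can arrange $\re p(z_*)\in(0,\pi)$ whenever $\im p(z_*)=0$; then $\alpha_0=0$, i.e.\ a genuinely vertical segment, satisfies both inequalities, and the rest of your continuity-plus-thin-parallelogram argument goes through unchanged.

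A secondary, smaller problem: in your rerouting step you invoke ``short horizontal arcs (slope $0$)''. In your parameterization the slope is $dx/dy$, so slope $0$ is a vertical segment, not a horizontal one (a horizontal arc is not a graph over $y$ and cannot be part of a vertical curve at all); moreover slope $0$ need not lie in the admissible window around $\alpha_0$ unless $\re p\in(0,\pi)$, which, as above, is not automatic. Your alternative — fix two opposite corners of a parallelogram that is thin transversally to the direction $\alpha_0+i$, so that the two-segment path $z_1\to z\to z_2$ has both slopes within $\varepsilon$ of $\alpha_0$ — is the correct way to get a single pair $(z_1,z_2)$ working for all $z\in K$, and should be kept as the actual argument rather than as an afterthought. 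For reference, the paper itself does not spell out a proof but defers to Lemma 5.3 of~\cite{F-K:05}, whose construction is of the same local ``canonical segment plus thickening'' type; your outline is compatible with it once the $\im p=0$ case is treated.
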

\noindent This lemma is an analog of Lemma 5.3 in~\cite{F-K:05};
mutatis mutandis, their proofs are identical.
\subsubsection{Standard asymptotic behavior}
\label{sec:stand-asympt-behav}
Let $U\subset \C$ be a regular simply connected domain; pick $z_0\in U$ and assume
$z\mapsto p(z)$ and $z\mapsto \sqrt{\sin p(z)}$ are analytic in $U$.\\
We say that a solution $\psi$ to~\cref{main} has the standard
(asymptotic) behavior
\begin{equation}
  \label{st_beh}
  \psi(z)\sim \frac1{\sqrt{\sin(p(z))}}\, e^{\frac{i}{h}\int_{z_0}^z
    p(\zeta)\,d\zeta} 
\end{equation}
in $U$ if, for $h$ sufficiently small, $\psi$ is analytic
and admits the asymptotic representation~\eqref{standard_asymptotic}
in $U$.\\
\Cref{th:wkb_main} says that, for any given bounded canonical domain
$K$, for any branch of $z\mapsto p(z)$ analytic in $K$, there exists a
solution with the standard asymptotic behavior~\eqref{st_beh} in
$K$. To study its asymptotic behavior outside $K$, we use the
construction described in the next subsection.
\subsection{A continuation principle}
\label{sec:cont-princ}
Assume the potential $v$ in~\cref{main} is analytic in a domain in
$\C$.  Let $z_0$ be a regular point, $V_0$ be a regular simply
connected domain containing $z_0$ and $p$ be a branch of the complex
momentum analytic in $V_0$.\\
Finally, let $\psi$ be a solution to~\eqref{main} having the standard
asymptotic behavior~\eqref{st_beh} in $V_0$. One has
\begin{theorem}
  \label{th:rectangle}
  Let $z_1\in V_0$.  Consider the straight line
  $L=\{z\in{\mathbb C}\,:\,{\rm Im}\,z= {\rm Im}\, z_1\}$. Pick
  $z_2\in L$ such that ${\rm Re}\,z_2> {\rm Re}\,z_1$. Assume the segment
  $I=\{z\in L\,:\,{\rm Re}\,z_1\le {\rm Re}\,z\le {\rm Re}\,z_2\}$ is
  regular.\\
  If ${\rm Im}\, p\,(z)<0$ along $I$, then there exists $\delta>0$
  such that the $\delta$-neighborhood of $I$ is regular and $\psi$ has
  the standard behavior~\eqref{st_beh} in this neighborhood.
\end{theorem}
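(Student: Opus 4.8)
\medskip

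\noindent The plan is to propagate the standard behavior~\eqref{st_beh} rightward from $V_0$ along $I$ by a finite chain of local comparisons carried out in canonical domains. Since $I$ is compact and regular, there is $\delta_0>0$ such that the $\delta_0$-neighborhood $N_0$ of $I$ is regular and $p$ extends analytically over it; any $\delta\in(0,\delta_0)$ then settles the regularity assertion, and it remains to produce the asymptotics. For that it suffices to establish a one--step statement: \emph{if $W$ is an open set on which $\psi$ has the standard behavior~\eqref{st_beh} and $w\in W\cap I$ with $\re w<\re z_2$, then $\psi$ has the standard behavior on $W$ enlarged by a neighborhood of a segment $[w,w+\eta]$, for some $\eta>0$ that can be chosen independently of $w$ and of $h$}. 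Indeed, by \Cref{le:1} each point of $I$ lies in a canonical domain independent of $h$; a finite subcover of $I$ fixes a uniform $\eta$, and then finitely many applications of the one--step statement, starting from $W=V_0$ and $w=z_1$, cover a neighborhood of $I$ and reach past $z_2$.

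For the one--step statement, fix, via \Cref{le:1}, a canonical domain $K\subset N_0$ containing $w$; it protrudes to the right of $w$, and, shrinking it if needed, we may assume $\im p<0$ on $K$ and that the overlap $W\cap K$ has horizontal width at least $h$ at every height attained by $K$ (for $h$ small). Choose $z'\in K$ and apply \Cref{th:wkb_main} in $K$ to the two branches $p_+:=p$ and $p_-:=2\pi-p$ with base point $z'$, obtaining solutions $\psi_+,\psi_-$ with
\[
  \psi_\pm(z)=\frac1{\sqrt{\sin\bigl(p_\pm(z)\bigr)}}\,e^{\frac ih\int_{z'}^z p_\pm(\zeta)\,d\zeta+o(1)},\qquad h\to0,
\]
in $K$. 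A direct computation from~\eqref{eq:wronskian} gives $w(\psi_+,\psi_-)(z)=\pm2\,e^{2\pi i(z-z')/h}\,(1+o(1))$, which does not vanish for small $h$, locally uniformly in $K$; hence $\psi_+,\psi_-$ form a basis and $\psi=a\psi_++b\psi_-$ with $h$-periodic $a,b$ given by~\eqref{eq:periodic-coef}. On $W\cap K$ all three of $\psi,\psi_+,\psi_-$ have standard behavior, so, evaluating the Wronskian quotients~\eqref{eq:periodic-coef} there and writing $C:=e^{\frac ih\int_{z_0}^{z'}p}$ and $\Phi(z):=-\tfrac2h\im\int_{z'}^z\bigl(p(\zeta)-\pi\bigr)\,d\zeta$, one gets
\[
  a(z)=C\,(1+o(1)),\qquad |b(z)|\le|C|\,e^{\Phi(z)}\,o(1)\qquad(z\in W\cap K,\ h\to0),
\]
while comparing the asymptotic formulas for $\psi_\pm$ yields $|\psi_-(z)/\psi_+(z)|=e^{-\Phi(z)}(1+o(1))$ throughout $K$. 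On $W\cap K$ the exponentials $e^{\pm\Phi}$ cancel, so $|b\psi_-|=o(|C\psi_+|)$ there. To pass to the rest of $K$ we use the $h$-periodicity of $b$: given $z\in K$, pick $z_*=z-nh\in W\cap K$ at the same height; since $\Phi(z_*)-\Phi(z)=\tfrac2h\im\int_{z-nh}^{z}p(\zeta)\,d\zeta<0$ (because $\im p<0$ along that horizontal segment), we obtain $|b(z)|=|b(z_*)|\le|C|\,e^{\Phi(z_*)}o(1)\le|C|\,e^{\Phi(z)}o(1)$, hence $|b(z)\psi_-(z)|\le|C\psi_+(z)|\,o(1)$ on all of $K$; likewise $a(z)=C(1+o(1))$ on all of $K$. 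Therefore $\psi(z)=C\,\psi_+(z)\,(1+o(1))=\dfrac1{\sqrt{\sin p(z)}}\,e^{\frac ih\int_{z_0}^z p(\zeta)\,d\zeta+o(1)}$ in $K$, which is exactly the standard behavior~\eqref{st_beh} on $K$; as $K$ contains a neighborhood of a segment $[w,w+\eta]$, the step is complete.

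The delicate point --- and the one I expect to be the main obstacle --- is precisely the last part of the one--step argument: preventing the subdominant solution $\psi_-$ from corrupting the asymptotics on the part of $K$ outside the already-controlled set $W$. This is where $\im p<0$ on $I$ enters crucially and in two ways at once: it makes $\psi_+$ the solution whose modulus grows as $z$ moves to the right (so $\psi$, being dominant, matches $\psi_+$), and it makes $\Phi$ strictly monotone under $z\mapsto z-h$, which is exactly what lets the control of the $h$-periodic coefficient $b$ obtained on one period inside $W$ be transported across all of $K$. The residual work is geometric bookkeeping: producing, along the regular arc $I$ on which $\im p$ does not vanish, a canonical domain with the modest features used above (inside $N_0$, containing $w$, with $W\cap K$ of width at least $h$ at each of its heights) and of $h$-independent size. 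This is a routine adaptation of the canonical-domain construction of~\cite{F-K:05} underlying \Cref{le:1}.
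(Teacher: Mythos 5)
Your proposal is correct and follows essentially the same route as the paper's proof: cover $I$ by finitely many $h$-independent local canonical domains carrying a WKB basis $\psi_\pm$ (via \Cref{le:1} and \Cref{th:wkb_main}), write $\psi=a\psi_++b\psi_-$ with $h$-periodic coefficients, estimate $a,b$ on the overlap with the previously controlled region, extend the estimates by $h$-periodicity, and use $\im p<0$ to suppress the recessive term to the right, iterating finitely many times. The only differences are cosmetic (you take the branch $2\pi-p$ instead of $-p$ for the second solution, so your Wronskian is $\pm 2e^{2\pi i(z-z')/h}(1+o(1))$ rather than $2i+o(1)$, and you package the periodicity transfer via the monotone function $\Phi$), and the geometric bookkeeping you defer is exactly what the paper handles with its choice of overlapping disks and of $\delta<\min_j r_j$.
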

\noindent Theorem~\ref{th:rectangle} roughly says that the asymptotic
formula~\eqref{standard_asymptotic} stays valid along a horizontal
line as long as the leading term grows exponentially. It is akin to
Lemma 5.1 in~\cite{F-K:05} that deals with differential equations. The
proof of Theorem~\ref{th:rectangle} given below follows the plan of
the proof of Lemma 5.1 in~\cite{F-K:05}.\\
Let $\tilde \psi$ be a solution to~\cref{main} with the standard
behavior $\tilde \psi(z)\sim \frac{e^{-\frac{i}{h}\int_{z_0}^z
    p(\zeta)\,d\zeta}}{\sqrt{\sin(p(z))}}$ in $V_0$.  If $\im p<0$ in $V_0$,
then the analogue, mutatis mutandis, of Theorem~\ref{th:rectangle} on
the behavior of $\tilde \psi$ to the left of $V_0$ holds.
\begin{proof}[Proof of Theorem~\ref{th:rectangle}]
  Clearly, for $\zeta\in I$, there exists an open disk $D$ centered at
  $\zeta$ such that $D$ is regular and $\im p(z)<0$ in $D$. In view of
  \Cref{le:1}, if $D$ is sufficiently small, then there exists two
  solutions $\psi_\pm$ having the standard behavior
  $\psi_\pm(z)\sim \frac{e^{\pm\frac{i}{h}\int_{z_0}^z
      p(\zeta)\,d\zeta}}{\sqrt{\sin(p(z))}}$ in $D$ (here, we first
  integrate from $z_0$ to $z_1$ in $V_0$, next from $z_1$ to $\zeta$
  along $I$ and finally from $\zeta$ to $z$ inside $D$).\\
  The segment $I$ being compact, we construct finitely many open
  disks $(D_j)_{0\leq j\leq J}$, each centered in a point of $I$,
  covering $I$ and such that
  \begin{enumerate}
  \item for $0\leq j\leq J$, the disk $D_j$ is regular and one has
    $\im p(z)<0$ in $D_j$;
  \item $z_1\in D_0$, $z_2\in D_J$, and $\psi$ has the standard
    behavior~\eqref{st_beh} in $D_0$;
  \item for $0\leq j\leq J$, there exists two solutions $\psi^j_\pm$ having
    the standard behavior
    $\psi_\pm^j\sim\frac1{\sqrt{\sin p(z)}} e^{\pm \frac
      ih\int_{z_0}^z p d\zeta}$ in the domain $D_j$.
  \end{enumerate}
  Denote the rightmost point of the boundary of $D_j$ by $w_j$.
  Possibly, excluding some of the disks $D_j$ from the collection
  $(D_j)_{0\leq j\leq J}$ and reordering them, we can and do assume that, for
  $1\leq j\leq J$, $D_{j}\setminus D_{j-1}\not=\emptyset$ and
  $w_{j}>w_{j-1}$. Indeed, to choose $D_1$, consider the point
  $w_0$. If $w_0$ is to the right of $z_2$, we can keep only $D_0$ in
  the collection. Otherwise, in our collection, there is a disk that
  contains $w_0$. Denote it by $D_1$. We then obtain the set of
  disks by induction.\\
  For $r>0$ we define
  \begin{equation*}
    S(r)=\{z\in \C\,:\,  |\im (z-z_1)|< r\}. 
  \end{equation*}
  For $1\le j\le J$, let $r_j=\min\{r>0\,:\, D_j\cap D_{j-1}\subset S(r)\}$.
  Pick $\D 0<\delta <\min_{1\le j\le J} r_j$ sufficiently small so
  that $I_\delta$, the $\delta$-neighborhood of $I$, be a subset of
  $\cup_{j=0}^J D_j$.\\
  Let us prove that $\psi$ has the standard behavior~\eqref{st_beh} in
  $I_\delta$. \\
  First we note that, for $h$ sufficiently small, by means of the
  formula $\psi(z)=-\psi(z-2h)-v(z)\psi(z-h)$ (i.e., by means of
  \cref{main}), $\psi$ can be analytically continued in $I_\delta$.
  It clearly satisfies equation~\eqref{main} in $I_\delta$.\\
  Let us justify the asymptotic
  representation~\eqref{standard_asymptotic} in $I_\delta$. For
  $0\le j\le J$, we define $d_j=D_j\cap I_\delta$.  For $j=1,2,\dots J$,
  we consecutively prove that $\psi$ has the standard behavior
  in $d_j$ to the right of $d_{j-1}$. Therefore, we let
  $d^0=d_{j-1}$, \ $d=d_{j}$, \ $\psi_\pm=\psi_\pm^{j}$, and then 
  proceed in the following way.\\
  Using the standard asymptotic behavior of $\psi_+$ and $\psi_-$, one
  proves the asymptotic formula
  \begin{equation}
    \label{w:f-pm}
    w(\psi_+(z),\psi_-(z))=2i+o(1), \quad z, z+h\in D_{j},\qquad h\to 0.
  \end{equation}
  As the Wronskians are $h$-periodic, for sufficiently small $h$,
  formula~\eqref{w:f-pm} is valid uniformly in $ I_{\delta}$.  This
  implies in particular that, for sufficiently small $h$, in
  $I_\delta$, the solution $\psi$ is a linear combination of the
  solutions $\psi_\pm $ with $h$-periodic coefficients, and one
  has~\eqref{eq:three-solutions} and~\eqref{eq:periodic-coef}.\\
  The leading terms of the asymptotics of $\psi$ and $\psi_+$ coincide
  in $d^0\cap d$.  Thus, one has
  \begin{equation}
    \label{6.1}
    a(z)=1+o(1), \quad z, z+h\in d^0\cap d,\qquad h\to 0.
  \end{equation}
  Due to the $h$-periodicity of $a$, for sufficiently small $h$,
  formula~\eqref{6.1} stays valid  in the whole $I_\delta$.\\
  One also has
  $ b(z)= o\left(e^{\frac{2i}h\int_{z_0}^{z}p(\zeta) d\zeta}\right)$
  for $z\in d^0\cap d$.  For sufficiently small $h$, the $h$-periodicity
  of $b$ yields
  \begin{equation}
    \label{eq:1}
    b(z)=o\left(e^{\frac{2i}h\int_{z_0}^{\tilde z}p(\zeta)
        d\zeta}\right),\quad z\in I_\delta,
  \end{equation}
  where $\tilde z\in d^0\cap d$ and $\tilde z=z$ mod $h$.\\
  Estimates~\eqref{6.1} and~\eqref{eq:1} imply that, in $d$, one has
  \begin{equation*}
    \psi(z)=a(z)\psi_+(z)+b(z)\psi_-(z)=\frac{e^{\frac
        ih\int_{z_0}^{z} p(\zeta)d\zeta}}{\sqrt{\sin p(z)}} \left (1+o(1)+
      o\left(e^{-\frac{2i}h\int_{\tilde z}^{z}p(\zeta)
          d\zeta}\right)\right).
  \end{equation*}
  Assume that $z\in d$ is located to the right of $d_0$.  As $\im p<0$ in
  $d$, one has
  $\re\left(i\int_{\tilde z}^{z} p(\zeta)d\zeta\right)>0$. This
  implies that $\psi$ has the standard behavior~\eqref{st_beh} in $d$
  to the right of $d^0$ and completes the proof of
  Theorem~\ref{th:rectangle}.
\end{proof}
\section{The asymptotics outside a neighborhood of 0}
\label{s:outside-as}
We consider the solution $\psi$ described in~\cref{sss:main} and
derive its asymptotics outside a neighborhood of $0$, the pole of $v$.
\subsection{The asymptotics outside a neighborhood of $\R_+$}
\label{sec:asympt-outs-neghb}
Recall that
\begin{itemize}
\item in the rectangle $S_c$, the solution $\psi$ has the standard
  behavior~\eqref{st_beh};
\item $S'$ is regular;
\item in $S'$, the branch $p$ appearing in~\eqref{st_beh} satisfies
  the inequality $\im p(z)<0$.
\end{itemize}
\Cref{th:rectangle} yields the asymptotics of $\psi$ in $S'$
to the right of $S_c$, namely
\begin{lemma}
  \label{le:psi-in-S-prime}
  The solution $\psi$ admits the standard asymptotic
  behavior~\eqref{st_beh} in the domain $S'$.
\end{lemma}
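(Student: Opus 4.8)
The plan is to apply \Cref{th:rectangle} repeatedly to propagate the standard asymptotic behavior from the rectangle $S_c$, where it is known by construction, rightward across all of $S'$. The key observation is that the three recalled bullet points give exactly the hypotheses needed: $S'$ is regular, $\psi$ has standard behavior in $S_c\subset S'$, and $\im p<0$ throughout $S'$. So for any point $z_*\in S'$ I would like to find a horizontal segment from $S_c$ to (a point to the right of) $z_*$ staying inside $S'$, with a regular $\delta$-neighborhood, and then conclude by \Cref{th:rectangle} that $\psi$ has standard behavior near $z_*$.

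Here are the steps in order. First, fix $z_*\in S'$ and set $Y=\im z_*$; consider the horizontal line $L=\{\im z=Y\}$. Since $S=\{|\re z|<d_x,\ |\im z|<d_y\}$ is an open rectangle and $z_*\in S$, the whole horizontal segment $L\cap S$ lies in $S$. Pick $z_1\in L$ with $\re z_1<-c$, so $z_1\in S_c$ and $\psi$ has standard behavior near $z_1$ (indeed in $S_c\subset V_0$-type domain). Pick $z_2\in L\cap S$ with $\re z_2>\re z_*$. Second, I must check the segment $I=\{z\in L:\re z_1\le\re z\le \re z_2\}$ is regular: if $Y\ne 0$ this is immediate because the only non-regular point of $S$ is $0$; if $Y=0$ one only needs to avoid the point $0$, but $0\in\R_+$ is excluded from $S'$, and more importantly $z_*\in S'$ forces $z_*\notin\R_+$, so in fact $Y=0$ only occurs with $\re z_*<0$ — and then one simply takes $\re z_2<0$ as well, keeping $I\subset\R_-^*\subset S'$ regular. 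Third, apply \Cref{th:rectangle} with this $z_1,z_2$: since $\im p<0$ along $I$ (as $I\subset S'$), the theorem provides $\delta>0$ such that the $\delta$-neighborhood of $I$ is regular and $\psi$ has standard behavior there; in particular $\psi$ has standard behavior at $z_*$. Fourth, note the error estimates in \eqref{standard_asymptotic} are locally uniform, so the pointwise conclusion assembles into the statement that $\psi$ has the standard behavior \eqref{st_beh} on all of $S'$.

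The one genuine subtlety — the step I expect to require the most care — is the bookkeeping of the branch of the complex momentum and of $\sqrt{\sin p}$. \Cref{th:rectangle} is stated for a fixed branch $p$ analytic in $V_0$ and continued along $I$ and into the $\delta$-neighborhood; I need the branch obtained this way to agree, for every horizontal line, with the single branch $p$ fixed on all of $S'$ in~\cref{sss:main} (the one with $\im p<0$). This is fine because $S'$ is simply connected and regular, so $p$ has a global analytic branch there and analytic continuation along any path in $S'$ reproduces it; the same applies to $z\mapsto\sqrt{\sin p(z)}$, which is analytic and nonvanishing in $S'$ by \Cref{rem:sin} and simple-connectedness, hence its continuation along $I$ is globally consistent. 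I would also remark that the base point $z_0\in S'\cap\R_-$ is fixed once and for all, so the phase $\int_{z_0}^z p\,d\zeta$ in \eqref{standard_asymptotic} is unambiguous along each $I$ (integrate from $z_0$ inside $S'$ to $z_1$, then along $I$), and these integrals glue across lines precisely because $S'$ is simply connected. Once this consistency is noted, the proof is just the finite-covering argument already packaged inside \Cref{th:rectangle}, applied line by line.
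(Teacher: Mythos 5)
Your proposal is correct and follows exactly the route the paper takes: the paper's proof of \Cref{le:psi-in-S-prime} is precisely the one-line observation that the three recalled facts (standard behavior in $S_c$, regularity of $S'$, and $\im p<0$ in $S'$) let one apply \Cref{th:rectangle} along horizontal segments to propagate the asymptotics rightward from $S_c$. You merely spell out the details the paper leaves implicit (choice of the segment, regularity when $\im z_*=0$, and branch consistency via the simple connectedness of $S'$), all of which are handled correctly.
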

\noindent Let us underline that the obstacle to justify the standard
behavior of $\psi$ in the whole domain $S$ is the pole of $v$ at 0.
\subsection{Asymptotics in a neighborhood of $\R_+$ outside a
  neighborhood of 0}
\label{sec:asympt-neghb-r_+}
We note that the function $z\mapsto (1-e^{2\pi i z/h})$ is
$h$-periodic. As $\psi$ satisfies~\cref{main}, so does
$z\mapsto (1-e^{2\pi i z/h})\,\psi(z)$. Moreover, as $\psi$ has poles
only at the points of $h\N$ and as these poles are simple, the
solution $z\mapsto (1-e^{2\pi i z/h})\,\psi(z)$ is analytic in $S$.\\
For $\delta>0$, let $P(\delta)=\{z\in S\,:\, \re z>0, \, |\im z|< \delta\}$.  In this
subsection, we prove
\begin{proposition}
  \label{pro:in-out}
  Let $\delta>0$ be sufficiently small. In $P(\delta)$, the solution
  $z\mapsto (1-e^{2\pi i z/h})\,\psi(z)$ has the standard behavior
  \begin{equation}
    \label{eq:psi-factor}
    (1-e^{2\pi i z/h})\,\psi(z)\sim n_0\,
    \frac{e^{\textstyle\frac ih\int_0^z p_{up}(\zeta)\,d\zeta}}{\sqrt{\sin p_{up}(z)}}\,, 
    \qquad n_0=e^{\frac{i}{h}\int_{z_0}^0 p(z)\,dz};
  \end{equation}
  here, $p_{up}$ and $\sqrt{\sin p_{up}}$ are respectively obtained
  from $p$ and $\sqrt{\sin p}$ by analytic continuation from
  $S'\cap\C_+$ to $P(\delta)$.
\end{proposition}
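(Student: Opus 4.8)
The plan is to reach $P(\delta)$ from the region $S'$, where $\psi$ is already controlled by \Cref{le:psi-in-S-prime}, by crossing $\R_+$ from above. Concretely, fix a point $z_*\in S'\cap\C_+$ close to $\R_+$ with $\re z_*>0$, lying in the $\delta$-neighborhood of $\R_+$; near $z_*$ the solution $\psi$ has the standard behavior~\eqref{st_beh} with the branch $p$ continued from $S'\cap\C_+$. Since $v$ is analytic and regular on $S\setminus\{0\}$, the branch $p_{up}$ obtained by continuing $p$ from $S'\cap\C_+$ across a segment of $\R_+$ bounded away from $0$ is analytic in a full (two-sided) neighborhood of that segment, and in particular in $P(\delta)$ for $\delta$ small (this is where analyticity across $\R_+^*$, as opposed to across $0$, is used). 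The same is true of $\sqrt{\sin p_{up}}$, which is nonvanishing there by \Cref{rem:sin}.

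\medskip

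Next I would apply \Cref{th:rectangle} — or rather its construction — to the \emph{analytic} solution $\Psi(z):=(1-e^{2\pi i z/h})\,\psi(z)$. As noted just before the proposition, $\Psi$ is analytic in all of $S$ and solves~\cref{main}. On the overlap $S'\cap\C_+\cap\{|\im z|<\delta\}$, the factor $1-e^{2\pi i z/h}$ is $1+o(1)$ uniformly on compacta (since $\im z>0$ forces $|e^{2\pi iz/h}|=e^{-2\pi\im z/h}\to0$), so $\Psi$ inherits the standard behavior of $\psi$ there, i.e.
\begin{equation*}
  \Psi(z)\sim \frac{1}{\sqrt{\sin p(z)}}\,e^{\frac ih\int_{z_0}^z p(\zeta)\,d\zeta}
  = n_0\,\frac{1}{\sqrt{\sin p_{up}(z)}}\,e^{\frac ih\int_0^z p_{up}(\zeta)\,d\zeta},
\end{equation*}
with $n_0=e^{\frac ih\int_{z_0}^0 p\,dz}$, where I split the integral $\int_{z_0}^z=\int_{z_0}^0+\int_0^z$ with the second piece taken from $0$ to $z$ inside the region of analyticity of $p_{up}$. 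This is exactly the claimed leading term~\eqref{eq:psi-factor} restricted to the part of $P(\delta)$ lying in $\C_+$. Because $\im p_{up}<0$ persists on $\R_+^*$ and hence (for $\delta$ small) throughout $P(\delta)$ — here I would invoke the standing assumption from \cref{ss:main} that $\im p$ does not vanish on $S\setminus\{0\}$, together with continuity of $p_{up}$ — the modulus of the leading exponential $|e^{\frac ih\int_0^z p_{up}}|$ grows as $z$ moves to the right. So the hypotheses of \Cref{th:rectangle}, applied with $V_0$ a small regular disc in $S'\cap\C_+$ where $\Psi$ has the standard behavior and with horizontal segments terminating in $P(\delta)$, are met; the theorem propagates the standard asymptotic behavior of $\Psi$ rightward and downward into a neighborhood of any compact horizontal segment of $\R_+^*$, covering $P(\delta')$ for a slightly smaller $\delta'$. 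Renaming $\delta'$ as $\delta$ gives the statement.

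\medskip

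Two points need care. First, \Cref{th:rectangle} as stated starts from a domain $V_0$ \emph{inside} the region of regularity and moves along a horizontal line; I must make sure the horizontal segments I use to reach all of $P(\delta)$ stay in $S\setminus\{0\}$ and remain regular, which forces $\delta$ (and the range of $\re z$) to be chosen so that the segments avoid a fixed disc around $0$ — that is why the proposition is only claimed in $P(\delta)$ and implicitly for $\re z$ bounded away from $0$ within $S$. Second, one must check that the branch of $p$ and of $\sqrt{\sin p}$ that \Cref{th:rectangle} produces along these segments is indeed the continuation $p_{up}$ from $S'\cap\C_+$, not some other branch: this is automatic because the continuation used in \Cref{th:rectangle} is along the horizontal segments themselves, which start in $S'\cap\C_+$, and the branch of the complex momentum analytic on a simply connected regular set is unique up to the transformations $p\mapsto \pm p+2\pi m$, fixed here by matching on the overlap.

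\medskip

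The main obstacle, as I see it, is bookkeeping rather than a deep difficulty: verifying that the analytic solution $\Psi=(1-e^{2\pi iz/h})\psi$ (not $\psi$ itself, which has poles on $h\N\cap P(\delta)$) is the right object to feed into the WKB continuation machinery, and then tracking that the constant $n_0$ and the branches match up exactly as in~\eqref{eq:psi-factor}. In particular one should double-check the claim that $1-e^{2\pi iz/h}=1+o(1)$ holds \emph{uniformly on the compact subsets of $S'\cap\C_+$ used to seed the continuation} — true because those compacta have $\im z\ge \mathrm{const}>0$ — so that the $o(1)$ in the standard behavior of $\Psi$ is genuinely an $o(1)$ and not merely an $O(1)$ factor; this is what lets the error estimate survive the application of \Cref{th:rectangle}, whose statement already absorbs an $o(1)$ in the exponent.
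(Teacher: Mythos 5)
There is a genuine gap at the heart of your argument: \Cref{th:rectangle} cannot propagate the asymptotics ``downward'' across $\R_+$, and that crossing is precisely the difficulty of \Cref{pro:in-out}. The theorem propagates the standard behavior along a horizontal segment $I$ at the fixed height $\im z=\im z_1$, and the width $\delta$ of the neighborhood it produces is constrained by the requirement that the seed disk $D_0$ lie inside a region where the standard behavior of the propagated solution is already known. For $\Psi=(1-e^{2\pi iz/h})\psi$ that region is a compact subset of $S'\cap\C_+$, since $1-e^{2\pi iz/h}=1+o(1)$ only when $\im z\ge \mathrm{const}>0$ (in $\C_-$ this factor is exponentially \emph{large}). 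Consequently $D_0\subset\C_+$, the overlap heights $r_j$ in the proof of \Cref{th:rectangle} are bounded by the radius of $D_0$, hence $\delta<\im z_1$ and the neighborhood $I_\delta$ stays strictly inside $\C_+$. So your construction only recovers \eqref{eq:psi-factor} in $P(\delta)\cap\C_+$ --- the easy case --- and yields nothing at real points or in $P(\delta)\cap\C_-$.

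The paper treats the three regions separately. In $\C_+$ the argument is the one you give. In $\C_-$, \Cref{le:psi-in-S-prime} already provides the standard behavior of $\psi$ itself (note $P(\delta)\cap\C_-\subset S'$), but with the branch satisfying $p=p_{up}-2\pi$ and $\sqrt{\sin p}=-\sqrt{\sin p_{up}}$ (\Cref{le:p_up-down}); the resulting factor $e^{-2\pi iz/h}$ exactly compensates the exponentially large $1-e^{2\pi iz/h}$, which gives \eqref{eq:psi-factor} there --- a branch/size bookkeeping your proposal does not address. For real $z_*$ neither one-sided asymptotic applies directly, and the paper builds a local basis $\psi_\pm$ near $z_*$ (via \Cref{le:1}, \Cref{th:wkb_main} and \Cref{th:rectangle}), writes $\psi=a\psi_++b\psi_-$, computes the Wronskians $w(\psi,\psi_-)$ and $w(\psi_+,\psi)$ on the lines $\im z=\pm\epsilon$ from the two one-sided asymptotics, and then --- because these Wronskians are $h$-periodic with only simple poles on $h\N$, so that after multiplication by $1-e^{2\pi iz/h}$ they become analytic functions of $\zeta=e^{2\pi iz/h}$ in an annulus --- applies the maximum principle to interpolate the estimates across $\R$ (\Cref{le:2} and \Cref{le:3}). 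This Wronskian/maximum-principle interpolation is the idea missing from your proof; some device of this kind is unavoidable to join the upper and lower asymptotics through the line of poles on $\R_+$.
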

\noindent To prove Proposition~\ref{pro:in-out}, it suffices to check
that, for any point $z_*\in P(\delta)$, there exists a neighborhood, say,
$V_*$ of this point (independent of $h$) where the solution
$z\mapsto (1-e^{2\pi i z/h})\psi(z)$ has the standard behavior.\\
As their study is simpler, we begin with the points $z_*\not\in \R$.
\subsubsection{Points $z_*$ in $\C_+$}
\label{sec:case-where-im}
Let $z_*\in P(\delta)\cap \C_+$. Let $V_*\subset P(\delta)\cap\C_+$ be an open disk (independent
of $h$) centered at $z_*$. In $V_*$
one has $1-e^{2\pi i z/h}=1+o(1)$ as $h\to0$.  Furthermore, by
\Cref{le:psi-in-S-prime}, $\psi$ has the standard
behavior~\eqref{st_beh} in $V_*$. Therefore, in $V_*$, one computes
\begin{equation}
  \label{eq:psi-with-factor}
  (1-e^{2\pi i z/h})\,\psi(z)= \frac{e^{\frac{i}{h}\int_{z_0}^z
      p(\zeta)\,d\zeta+o(1)}}{\sqrt{\sin p(z) }}=
  n_0\,\frac{e^{\frac{i}{h}\int_{0}^z
      p_{up}(\zeta)\,d\zeta+o(1)}}{\sqrt{\sin p_{up}(z)}}. 
\end{equation}
This implies the standard behavior~\eqref{eq:psi-factor} in $V_*$.
\subsubsection{Points $z_*$ in $\C_-$}
\label{sec:case-where-im-1}
Pick now $z_*\in P(\delta)\cap \C_-$ and let $V_*\subset P(\delta)\cap\C_-$ be an open disk
(independent of $h$) centered at $z_*$.\\
We use
\begin{lemma}
  \label{le:p_up-down}
  For $z\in P(\delta)\cap \C_-$, one has
  \begin{equation}
    \label{eq:p-p-d}
    p(z)=p_{up}(z)-2\pi, \quad \sqrt{\sin p(z)}=-\sqrt{\sin p_{up}(z)}.
  \end{equation}
\end{lemma}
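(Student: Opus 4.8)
The plan is to reduce both identities in~\eqref{eq:p-p-d} to the behaviour of the fixed branches of $p$ and of $\sqrt{\sin p}$ when one crosses the cut $\R_+^*$, i.e. to the monodromy of the complex momentum around its logarithmic branch point at $0$. First, note that $\C_-\cap P(\delta)\subset S'$, so in~\eqref{eq:p-p-d} the symbols $p$ and $\sqrt{\sin p}$ denote the branches fixed on $S'$, whereas $p_{up}$ and $\sqrt{\sin p_{up}}$ are, by definition, the analytic continuations of $p$ and $\sqrt{\sin p}$ from $S'\cap\C_+$ (hence from $\C_+\cap P(\delta)$, where they agree with $p$ and $\sqrt{\sin p}$) across $\R_+^*$ into $\C_-\cap P(\delta)$. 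Thus the task is to compare, on $\C_-\cap P(\delta)$, the branch obtained by continuing the ``short way'' across $\R_+^*$ with the one fixed on all of $S'$ (reached from $\C_+$ the ``long way'' around $0$).

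The key inputs are the two statements of \Cref{le:p-ln}. Setting $g(z):=p(z)-i\ln(-z)$, which is analytic in $S$, we have $p(z)=i\ln(-z)+g(z)$ with the branch of $\ln(-z)$ normalized by $\ln(-z)|_{z=-1}=0$. Likewise, since $z\sin p(z)$ is analytic and non-vanishing in the simply connected domain $S$, so is $\sqrt{-z\sin p(z)}$ (the branch appearing in $G_0$, for which $\sqrt{-z\sin p(z)}=\sqrt{-z}\,\sqrt{\sin p(z)}$ with $\sqrt{-z}:=e^{\frac12\ln(-z)}$); hence $\sqrt{\sin p(z)}=\sqrt{-z\sin p(z)}\,\big/\sqrt{-z}$. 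In both formulas the only ingredient that is multivalued near $0$ is $\ln(-z)$ (resp. $\sqrt{-z}$): the factors $g$ and $\sqrt{-z\sin p}$ are genuinely single-valued on $S$ and so are unchanged by continuation across $\R_+^*$.

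It then remains to track $\ln(-z)$ across $\R_+^*$. For $z$ just above $\R_+^*$ one has $\im\ln(-z)=\arg(-z)$ close to $-\pi$; its continuation $\tilde\ell$ across $\R_+^*$ (downwards) varies continuously and, on the lower side, still has imaginary part close to $-\pi$, while the \emph{fixed} branch of $\ln(-z)$ on $\C_-\cap P(\delta)$ has imaginary part close to $+\pi$. Hence $\ln(-z)=\tilde\ell(z)+2\pi i$ on $\C_-\cap P(\delta)$, and therefore $\sqrt{-z}=e^{\frac12(\tilde\ell+2\pi i)}=-e^{\tilde\ell/2}$ there. Since $p_{up}$ is the continuation of $i\ln(-z)+g$ across $\R_+^*$ and $g$ is single-valued, $p_{up}=i\tilde\ell+g$; consequently $p=i\tilde\ell+g-2\pi=p_{up}-2\pi$ on $\C_-\cap P(\delta)$, which is the first relation in~\eqref{eq:p-p-d}. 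In the same way, $\sqrt{-z\sin p}$ being single-valued on $S$, one gets $\sqrt{\sin p_{up}}=\sqrt{-z\sin p}\,\big/e^{\tilde\ell/2}$ on $\C_-\cap P(\delta)$, whence $\sqrt{\sin p}=\sqrt{-z\sin p}\,\big/(-e^{\tilde\ell/2})=-\sqrt{\sin p_{up}}$, which is the second relation.

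I expect the only delicate point to be the bookkeeping of branches and signs: one must check that the side of $\R_+^*$ from which $p_{up}$ is continued corresponds to $\arg(-z)\approx-\pi$, so that the jump of the fixed $\ln(-z)$ across $\R_+^*$ is $+2\pi i$ (and not $-2\pi i$), and consequently that the sign acquired by $\sqrt{-z}$ is $-1$ rather than $+1$. As a consistency check, $\sin(p_{up}-2\pi)=\sin p_{up}$ already forces $\sqrt{\sin p}=\pm\sqrt{\sin p_{up}}$ on the connected set $\C_-\cap P(\delta)$, so only the sign is at stake; and the choice $-1$ is precisely the one that makes $G_0$ (analytic in $S$ by \Cref{le:p-ln}) and the representation~\eqref{eq:psi-factor} consistent on both sides of $\R_+^*$.
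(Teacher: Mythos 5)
Your proof is correct and follows essentially the same route as the paper: both identities are reduced, via \Cref{le:p-near-zero} and \Cref{le:p-ln}, to the logarithmic monodromy of $p$ at $0$, the continuation across $\R_+^*$ differing from the fixed branch by $2\pi i$ in the logarithm (hence $p=p_{up}-2\pi$) and by a sign in the square root. Your sign bookkeeping through the single-valued factor $\sqrt{-z\sin p(z)}$ and the jump of $\sqrt{-z}$ is simply a more explicit version of the paper's terse appeal to~\eqref{eq:p-as-0}.
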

\noindent Lemma~\ref{le:p_up-down} yields
\begin{equation}
  \label{eq:123}
  \frac{e^{\frac{i}{h}\int_{z_0}^z p(\zeta)\,d\zeta}}{\sqrt{\sin p(z)}} =
  -n_0 \,\frac{e^{\frac{i}{h}\int_{0}^z
      (p_{up}(z)-2\pi)\,dz}}{\sqrt{\sin p_{up}(z)}},\quad z\in V_*.
\end{equation}
By~\Cref{le:psi-in-S-prime}, $\psi$ has the standard
behavior~\eqref{st_beh} in $V_*$. Therefore, \eqref{eq:123} implies
that, in $V_*$, one has
\begin{equation*}
  (1-e^{2\pi i z/h})\psi(z) = n_0\,(1-e^{-2\pi i z/h})
  \frac{e^{\frac{i}{h}\int_{0}^zp_{up}(\zeta)\,d\zeta+o(1)}}{\sqrt{\sin p_{up}(z)}}=
 n_0\,\frac{e^{\frac{i}{h}\int_{0}^zp_{up}(\zeta)\,d\zeta+o(1)}}{\sqrt{\sin p_{up}(z)}},
\end{equation*}
and $z\mapsto (1-e^{2\pi i z/h})\psi(z)$ has the standard 
behavior~\eqref{eq:psi-factor}  in $V_*$.\\
To prove \Cref{le:p_up-down}, we shall use
\begin{lemma}\label{le:p-near-zero}
  In $S'$, one has
  \begin{equation}
    \label{eq:p-as-0}
    p(z)=i\ln(z)+C+g(z)
  \end{equation}
  where $\ln$ is a branch of the logarithm analytic in
  $\C\setminus\R_+$, $C$ is a constant, and $g$ is a function analytic
  in $S\cup{0}$ vanishing at $0$.
\end{lemma}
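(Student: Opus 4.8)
The plan is to exploit the defining relation $2\cos p(z)+v(z)=0$ together with the fact, established in the remarks preceding \cref{sss:main}, that $S\setminus\{0\}$ is regular, that $\im p<0$ in $S'$, and that $v$ has a simple pole at $0$. Since $v(z)\to\infty$ as $z\to 0$, we have $\cos p(z)=-v(z)/2\to\infty$ as well. First I would record the elementary asymptotic behavior of the inverse of $\cos$ near infinity: if $w\to\infty$ with $\im w<0$ (which is the relevant half-plane because $\im p<0$), then the solution $p$ of $\cos p=-v/2$ with that sign of imaginary part satisfies $e^{-ip}=-v+o(v)$, hence $-ip=\ln(-v)+o(1)$, i.e. $p=i\ln(-v)+o(1)$, where $\ln$ is a suitable branch. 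More precisely, from $2\cos p=e^{ip}+e^{-ip}=-v$ and $|e^{ip}|=e^{-\im p}\to\infty$ while $|e^{-ip}|\to 0$, we get $e^{ip}(1+e^{-2ip})=-v$, so $ip=\ln(-v)-\ln(1+e^{-2ip})$; since $e^{-2ip}=O(1/v^2)\to 0$, the last term is $O(1/v^2)$, giving $p=-i\ln(-v)+O(1/v^2)$.

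Next I would convert this into the stated form by using that $v$ has a simple pole at $0$: write $v(z)=a/z+v_1(z)$ with $a\neq 0$ and $v_1$ analytic near $0$, so that $-v(z)=(-a/z)(1+zv_1(z)/(-a))=(-a/z)(1+O(z))$. Then $\ln(-v(z))=\ln(-a/z)+\ln(1+O(z))=-\ln z+\ln(-a)+O(z)$ (choosing on $\C\setminus\R_+$ the branch of $\ln$ normalized as in the statement; the ambiguity $2\pi i\Z$ is fixed by matching $\im p<0$, i.e. by the requirement that the formula hold throughout $S'$ consistently with the chosen branch $p$). Combining with the previous step, $p(z)=-i(-\ln z+\ln(-a))+O(z)+O(1/v^2)=i\ln z-i\ln(-a)+(\text{analytic, vanishing at }0)$, which is exactly \eqref{eq:p-as-0} with $C=-i\ln(-a)$ and $g$ collecting the $O(z)$ and $O(1/v^2)=O(z^2)$ corrections. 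The one point requiring a little care is that $g$, defined a priori only as $p(z)-i\ln z-C$ on $S'$, actually extends analytically across $\R_+\cap S$ to a neighborhood of $0$ in $S$; this follows because $p-i\ln(-z)$ is already known to be analytic in $S$ by \Cref{le:p-ln} (whose proof would precede this lemma, or can be invoked), and $i\ln(-z)-i\ln z$ is, up to the additive constant $\pm\pi$, analytic across $\R_+$ — more directly, one checks the monodromy of $p$ around $0$ is $p\mapsto p-2\pi$ (the branch point of the complex momentum at a simple pole of $v$), which matches that of $i\ln z$, so the difference is single-valued and bounded near $0$, hence analytic there by Riemann's removable singularity theorem, and it vanishes at $0$ because the $O(z)$ estimate shows the limit is $0$ after absorbing the constant into $C$.

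The main obstacle, and the only genuinely delicate point, is bookkeeping the branches: there are three logarithms in play — the branch of $\ln(-z)$ fixed in the text, the branch of $\ln(-v(z))$ coming from inverting $\cos$, and the branch $p$ itself selected by $\im p<0$ in $S'$ — and one must verify they are mutually compatible so that no stray $2\pi$ appears. I would handle this by evaluating everything along $\R_-^*$, where $v$ is real, $-z>0$, and the sign condition $\im p<0$ pins down $p$ uniquely; checking the identity there and then invoking analytic continuation (all three sides being analytic in the simply connected set $S'$) finishes the argument. The estimates $O(z)$, $O(z^2)$ are routine given that $v_1$ is analytic and $v$ blows up like $1/z$, so I would not belabor them.
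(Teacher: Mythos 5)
Your proposal is correct and follows essentially the same route as the paper: solve $2\cos p=-v$ for $e^{ip}$, use $\im p<0$ to select the dominant exponential, expand via the Laurent series of $v$ at its simple pole, and glue across $\R_+$ by showing the continuations from $\C_\pm$ agree near $0$. The only caveat is that your first suggestion --- invoking \Cref{le:p-ln} --- would be circular, since the paper deduces that lemma from this one; your fallback monodromy argument (the monodromy $p\mapsto p-2\pi$ matching that of $i\ln z$, plus the identity theorem along $\R_+\cap S$) is the correct, self-contained version and is in substance identical to the paper's two-continuation argument.
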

\begin{proof}[Proof of \Cref{le:p-near-zero}]
  By definition (see~\eqref{eq:p}), $p$ satisfies
  $ e^{2ip(z)}+v(z)e^{ip(z)}+1=0$. Therefore,
  \begin{equation*}
    e^{ip(z)}=-v(z)/2+\sqrt{ (v(z)/2)^2-1},
  \end{equation*}
  where the branch of the square root is to be determined.  Since
  $v(z)\to\infty$ as $z\to0$, we rewrite this formula in the form
  \begin{equation}
    \label{exp-i-p}
    e^{ip(z)}=-v(z)/2\,(1+\sqrt{1-(2/{v(z)})^2}).
  \end{equation}
  As $\im p(z)<0$ in $S'$ and as $v(z)\to\infty$ when $z\to0$,
  \cref{eq:p} implies that $e^{ip(z)}\to\infty$ when $z\to 0$. Therefore,
  in~\eqref{exp-i-p}, the determination of the square root is to be
  chosen so that $\sqrt{1-(2/{v(z)})^2}=1+o(1/(v(z))^2)$ as
  $z\to 0$. Then,~\eqref{exp-i-p} yields the representation
  \begin{equation}\label{exp-i-p:1}
    e^{ip(z)}=-v(z)+\tilde g(z)
  \end{equation}
  where $\tilde g$ is analytic in a neighborhood of 0 vanishing at
  $0$. By assumption, $v$ has the Laurent expansion
  $v(z)=v_{-1}/z +v_0+v_1 z+\dots$, \ $v_{-1}\ne 0$, in a neighborhood
  of 0. Thus, \eqref{exp-i-p:1} implies
  representation~\eqref{eq:p-as-0} in a neighborhood of 0.\\
  The function $z\mapsto g(z):=p(z)-i\ln(z)-C$ is analytic in $S$ in a
  neighborhood of $\R_-$.  To check that it is analytic in the whole
  domain $S$, we consider two analytic continuations of
  $z\mapsto p(z)-i\ln z -C$ into a neighborhood of $\R_+$ in $S$, one
  from $S'\cap\C_+$ and another from $S'\cap \C_-$. As they coincide
  near 0, they coincide in the whole connected component of $0$ in
  their domain of analyticity. So, $g$ is analytic in $S$.  This
  completes the proof of \Cref{le:p-near-zero}.
\end{proof}
\noindent It now remains to prove Lemma~\ref{le:p_up-down}. Therefore, 
we first prove  \Cref{le:p-ln}.
\begin{proof}[The proof of \Cref{le:p-ln}]
  The statements of \Cref{le:p-ln} on the analyticity of the functions
  $z\mapsto p(z)-i\ln(-z)$ and $z\mapsto z\sin p(z)$ follow directly
  from \Cref{le:p-near-zero}. This lemma also implies that the second
  function does not vanish at 0. Finally, this function does not
  vanish in $S\setminus \{0\}$ in view of \Cref{rem:sin}.  The proof
  of \Cref{le:p-ln} is complete.
\end{proof}
\begin{proof}[Proof of \Cref{le:p_up-down}]
  Let $z\in P(\delta)\cap \C_-$. The first formula in~\eqref{eq:p-p-d}
  follows directly from the representation~\eqref{eq:p-as-0}. By
  \Cref{le:p-ln}, the function $z\mapsto \sin p(z)$ is analytic and
  does not vanish in $S\setminus\{0\}$. So, for
  $z\in P(\delta)\cap \C_-$, \ $\sqrt{\sin p(z)}$ and
  $\sqrt{\sin p_{up}(z)}$ either coincide or are of opposite signs.
  In view of~\eqref{eq:p-as-0}, we obtain the second relation
  in~\eqref{eq:p-p-d}.
\end{proof}
\noindent Having proved \Cref{le:p_up-down}, the analysis for $z\in \C_-$ is
complete.
\subsubsection{Real points $z_*$: construction of two linearly
  independent solutions}
\label{sec:case-z_inr.-two}
To treat the case of real points, we define two linearly independent
solutions to~\cref{main} that have standard asymptotic behavior to the
right of 0 and express $\psi$ in terms of these solutions.\\
Below, in the proof of \Cref{pro:in-out}, we always assume that $z_*$
is a point in $\R_+^*\cap S$. Let $x_\pm\in S$ be two points such
that $0<x_+<z_*<x_-$.\\
We recall that the set $S\setminus\{0\}$ is regular (see the beginning
of \cref{ss:main}). For $\bullet\in\{+,-\}$, the point $x_\bullet$ is
regular.  By \Cref{le:1} and \Cref{th:wkb_main}, there exists a
regular $\epsilon_\bullet$-neighborhood $V_\bullet$ of $x_\bullet$ in
$P(\delta)$ such that there exists a solution $\psi_\bullet$
to~\eqref{main} with the standard asymptotic behavior
$\psi_\bullet\sim \frac1{\sqrt{\sin p_{up}(z)}}e^{\bullet\frac {i}h
  \int_{x_\bullet}^z p_{up}(\zeta)\,d\zeta}$ in the domain
$V_\bullet$.\\
Let $\epsilon=\min\{\epsilon_+, \epsilon_-\}$ and $V$ be the
$\epsilon$-neighborhood of the interval $(x_+,x_-)$.\\
As the imaginary part of the complex momentum can not vanish in
$S\setminus \{0\}$, $\im p_{up}$ is negative in $V$. Therefore, by
\Cref{th:rectangle}, the solution $\psi_+$ has the standard asymptotic
behavior
\begin{equation}
  \label{eq:3}
 \psi_+(z)\sim \frac{e^{\frac{i}{h}\int_{x_+}^z
    p_{up}(\zeta)\,d\zeta}}{\sqrt{\sin(p_{up}(z))}} 
\end{equation}
in $V$ to the right of $V_+$.\\
Similarly, one shows that $\psi_-$ has the standard asymptotic
behavior
\begin{equation}
  \label{eq:4}
 \psi_-(z)\sim \frac{e^{-\frac{i}{h}\int_{x_-}^z
    p_{up}(\zeta)\,d\zeta}}{\sqrt{\sin(p_{up}(z))}} 
\end{equation}
in $V$ to the left of $V_-$.\\
Then,~\eqref{eq:3} and~\eqref{eq:4} yield that, as $h\to 0$, one has
\begin{equation}
  \label{eq:w_psi_pm}
  w(\psi_+(z),\psi_-(z))=2ie^{\frac{i}{h}\int_{x_+}^{x_-}p_{up}(z)\,dz+o(1)}, \qquad z\in V.
\end{equation}
At cost of reducing $\varepsilon$, this asymptotic is uniform in
$V$. We note that the error term is analytic in $z$ together with
$\psi_\pm$.\\
In view of~\eqref{eq:w_psi_pm}, for sufficiently small $h$, the
solutions $\psi_\pm$ form a basis of the space of solutions
to~\cref{main} defined in $V$; for $z\in V$, we
have~\eqref{eq:three-solutions} and~\eqref{eq:periodic-coef}.  Our
next step is to compute the asymptotics of $a$ and $b$
in~\eqref{eq:three-solutions}.
\subsubsection{The coefficient $a$}
\label{sec:coefficient-a}
We prove
\begin{lemma}
  \label{le:2}
  As $h\to 0$,
  \begin{equation}
    \label{as:a}
    a(z)=\frac{n_0\;e^{\frac{i}{h}\int_{0}^{x_+} p_{up}(z)\,dz+o(1)}}
    {1-e^{2\pi i z / h}}, \quad z\in V,
  \end{equation}
  where the error term is analytic in $z$.
\end{lemma}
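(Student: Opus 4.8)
The coefficient $a$ is given by the Wronskian formula~\eqref{eq:periodic-coef}, namely $a(z)=w(\psi(z),\psi_-(z))/w(\psi_+(z),\psi_-(z))$, and it is $h$-periodic. The strategy is to evaluate both Wronskians in a region where we already know the asymptotics of all three solutions, and then use $h$-periodicity to transport the resulting formula to all of $V$. The natural place to work is in $V\cap\C_+$ (or more precisely on a small disk sitting in the upper half-plane inside $V$, above the interval $(x_+,x_-)$): there $\psi_-$ has the standard behavior~\eqref{eq:4} by construction, and $\psi$ has the standard behavior~\eqref{st_beh} by \Cref{le:psi-in-S-prime} (since such a disk lies in $S'$). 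The denominator $w(\psi_+,\psi_-)$ is already computed in~\eqref{eq:w_psi_pm}, uniformly in $V$, so only the numerator $w(\psi(z),\psi_-(z))$ needs work.

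First I would pick a reference point $\tilde z\in V\cap\C_+$ (independent of $h$, lying over the interval $(x_+,x_-)$) and compute $w(\psi(\tilde z),\psi_-(\tilde z))$. Plugging the standard asymptotics of $\psi$ and $\psi_-$ into the definition~\eqref{eq:wronskian} of the Wronskian and using that $\sqrt{\sin p_{up}}$ varies slowly, a routine computation (the same one that produces~\eqref{w:f-pm} and~\eqref{eq:w_psi_pm}) gives
\begin{equation*}
  w(\psi(\tilde z),\psi_-(\tilde z))=2i\,e^{\frac ih\int_{z_0}^{x_-}p_{up}(\zeta)\,d\zeta+o(1)}
  =2i\,n_0\,e^{\frac ih\int_{0}^{x_-}p_{up}(\zeta)\,d\zeta+o(1)},
\end{equation*}
where I have written $\int_{z_0}^{x_-}p_{up}=\int_{z_0}^0 p+\int_0^{x_-}p_{up}$ and recalled $n_0=e^{\frac ih\int_{z_0}^0 p\,dz}$; here the relevant branch on $\C_+$ is $p=p_{up}$, so there is no sign subtlety. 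Dividing by~\eqref{eq:w_psi_pm} evaluated at the same point $\tilde z$, the exponential $e^{\frac ih\int_{x_+}^{x_-}p_{up}}$ cancels against part of the numerator, leaving
\begin{equation*}
  a(\tilde z)=n_0\,e^{\frac ih\int_0^{x_+}p_{up}(\zeta)\,d\zeta+o(1)}.
\end{equation*}

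Finally I would use $h$-periodicity to pass from $\tilde z$ to a general $z\in V$. Since $a$ is $h$-periodic, $a(z)=a(\tilde z)$ whenever $z\equiv\tilde z\bmod h$ and both lie in $V$; but a general point of $V$ is not congruent mod $h$ to the fixed point $\tilde z$. The clean way around this is to observe that the product $(1-e^{2\pi iz/h})\,a(z)$ is what we actually want to control: both factors are $h$-periodic, so it suffices to show $(1-e^{2\pi i\tilde z/h})\,a(\tilde z)=n_0\,e^{\frac ih\int_0^{x_+}p_{up}+o(1)}$ and that the error term, being analytic in $z$ and $o(1)$ on a disk, stays $o(1)$ after periodic continuation (exactly as in the treatment of $a$ and $b$ in the proof of \Cref{th:rectangle}, cf.~\eqref{6.1}–\eqref{eq:1}). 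On the disk $\tilde z\in\C_+$ with $\im\tilde z$ bounded away from $0$, $|e^{2\pi i\tilde z/h}|=e^{-2\pi\im\tilde z/h}\to0$, so $1-e^{2\pi i\tilde z/h}=1+o(1)$ there, which gives the claim at $\tilde z$; periodicity then propagates it to all of $V$, yielding~\eqref{as:a}. The only genuinely delicate point is the bookkeeping of the branch of the complex momentum and of $\sqrt{\sin p}$ when one moves the reference computation down toward $\R_+$—but this is precisely what \Cref{le:p_up-down} was set up to handle, so by doing the Wronskian computation on the upper-half-plane disk (where $p=p_{up}$ outright) one sidesteps it entirely. The factor $1-e^{2\pi iz/h}$ in the denominator of~\eqref{as:a} is exactly the price for carrying the formula, via periodicity, from that disk down across the real axis.
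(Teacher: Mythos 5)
Your computation of $w(\psi,\psi_-)$ at a point $\tilde z\in V\cap\C_+$ and the resulting value $a(\tilde z)=n_0e^{\frac ih\int_0^{x_+}p_{up}+o(1)}$ are correct and agree with the paper's~\eqref{a:w-up}. The gap is in the transport step. You claim that the $h$-periodic, analytic error term, being $o(1)$ on a disk in the upper half-plane, ``stays $o(1)$ after periodic continuation'' to all of $V$. This is false: $h$-periodicity only moves estimates \emph{horizontally}, and an $h$-periodic analytic function that is $o(1)$ at height $\im z=y_0>0$ can be exponentially large at height $\im z=-\epsilon$. Concretely, $g(z)=e^{-2\pi y_0/h}e^{2\pi iz/h}$ is $h$-periodic and entire, satisfies $|g|=e^{-4\pi y_0/h}=o(1)$ on the line $\im z=y_0$, yet $|g|=e^{2\pi(\epsilon-y_0)/h}\to\infty$ on $\im z=-\epsilon$. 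Your appeal to the mechanism of~\eqref{6.1}--\eqref{eq:1} in the proof of \Cref{th:rectangle} does not apply here: there the set $d^0\cap d$ on which the periodic coefficients are first estimated spans the full vertical extent of the strip, so periodicity only has to act horizontally; your disk sits strictly in $\C_+$ and never reaches the heights $\im z\le 0$.

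This is precisely why the paper cannot ``sidestep'' \Cref{le:p_up-down}. To control the $h$-periodic function $f(z)=n_0^{-1}(1-e^{2\pi iz/h})e^{-\frac ih\int_0^{x_-}p_{up}}w_-(z)-2i$ throughout the strip $|\im z|\le\epsilon$, one applies the maximum principle in the annulus $e^{-2\pi\epsilon/h}\le|\zeta|\le e^{2\pi\epsilon/h}$, $\zeta=e^{2\pi iz/h}$, and this requires an $o(1)$ bound on \emph{both} boundary circles, i.e., on both lines $\im z=\pm\epsilon$. The bound on $\im z=-\epsilon$ comes from the asymptotics~\eqref{eq:5} of $\psi$ in $V\cap\C_-$, where the branch relations~\eqref{eq:p-p-d} produce exactly the extra factor $-e^{-2\pi iz/h}$ in $w_-$ (see~\eqref{a:w-down}) that makes $(1-e^{2\pi iz/h})w_-$ uniformly close to $2in_0e^{\frac ih\int_0^{x_-}p_{up}}$ on both sides of $\R$. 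Without that lower-half-plane computation your argument establishes~\eqref{as:a} only in the horizontal band swept out by your disk, not on $\R_+$ (where $\psi$ has its poles and where the formula is actually needed) nor in $V\cap\C_-$.
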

\begin{proof}
  For $z\in V$, we shall compute the asymptotics of the Wronskian
  $w_-(z)=w(\psi(z),\psi_-(z))$ appearing in the formula for $a$
  in~\eqref{eq:periodic-coef}.\\
  As those of $\psi$, the poles of $w_-$ in $S$ are contained in $h\N$
  and they are simple. We first compute the asymptotics of $w_-$ in
  $V$ outside the real line.  Then, the information on the poles
  yields a global asymptotic representation for $w_-$ in
  $V$ and, thus,~\eqref{as:a}.\\
  First, we assume that $z\in V\cap \C_+$. Then, $p_{up}$ and
  $\sqrt{\sin p_{up}} $ coincide respectively with $p$ and
  $\sqrt{\sin p}$; using the asymptotics of $\psi$ and $\psi_-$ yields
  \begin{equation}
    \label{a:w-up}
    w_-(z)=2in_0\; e^{\frac{i}{h}
      \int_{0}^{x_-} p_{up}(z)\,dz+o(1)}, 
    \quad z\in V\cap \C_+, \quad h\to 0.
  \end{equation}
  Now, we assume that $z\in V\cap \C_-$. Then,~\eqref{eq:p-p-d} implies
  that
  \begin{equation}
    \label{eq:5}
    \psi(z)=-\frac{n_0\;e^{\frac
        ih\int_{0}^z(p_{up}(\zeta)-2\pi)\,d\zeta
        +o(1)}} 
    {\sqrt{\sin p_{up}(z)}}, \qquad h\to0.
  \end{equation}
  This representation and~\eqref{eq:4} yield
  \begin{equation}
    \label{a:w-down}
    w_-(z)=-2 i n_0\;e^{-2\pi i z/h}
    e^{\frac ih\int_{0}^{x_-}p_{up}(z)\,dz}(1+o(1)), 
    \quad z\in V\cap \C_-, \quad h\to 0.
  \end{equation}
  Let
  \begin{equation*}
    f\,:\,z\mapsto n_0^{-1}\,(1-e^{2\pi iz/h}) \, e^{-\frac ih
      \int_{0}^{x_-}p_{up}(z)\,dz}w_-(z)-2i.  
  \end{equation*}
  Representations~\eqref{a:w-up} and~\eqref{a:w-down} imply that
  \begin{equation}
    \label{f:est}
    f(z)=o(1),\quad z\in V\setminus \R,\quad h \to 0.
  \end{equation}
  We recall that $f$ is an $h$-periodic function (as is $w_-$).
  Therefore, at the cost of reducing $\epsilon$ somewhat, we get the
  uniform estimate $f(z)=o(1)$ for $|\im z|=\epsilon$ and $h$
  sufficiently small. Moreover, the description of the poles of $w_-$
  implies that $f$ is analytic in the strip $\{|\im z|\le \epsilon\}$.
  \\
  Now, let us consider $f$ as a function of $\zeta=e^{2\pi i z/h}$.
  It is analytic in the annulus
  $\{e^{-2\pi\epsilon/h}\le |\zeta|\le e^{2\pi\epsilon/h}\}$ and, on
  the boundary of this annulus, it admits the uniform estimate
  $|f(\zeta)|=o(1)$. By the maximum principle, the estimate
  $f(\zeta)=o(1)$ holds uniformly in the annulus.\\
  Thus, as a function of $z$, the function $f$ satisfies the uniform
  estimate $ f(z)=o(1)$ for $|\im z|\le\epsilon$ and, therefore, in
  the whole domain $V$.
  \\
  This estimate, the representation~\eqref{eq:w_psi_pm} and the
  definition of $a$ (see~\eqref{eq:periodic-coef}) imply~\eqref{as:a}.
\end{proof}
\subsubsection{The coefficient $b$}
\label{sec:coefficient-b}
We estimate $b$ in $V$ for $h$ sufficiently small. To state our
result, let $\gamma$ be the connected component of $x_+$ in the set of
$z\in P(\delta)$ satisfying
\begin{equation*}
  \im \int_{x_+}^{z}p(z)\,dz=0.
\end{equation*}
As
\begin{equation*}
  \frac{\partial}{\partial x}\im \int_{x_+}^{z}p(\zeta)\,d\zeta
  =\im p(z)\ne 0\quad \text{in } P(\delta),
\end{equation*}
the Implicit Function Theorem guarantees that $\gamma$ is a smooth
vertical curve in a neighborhood of $x_+$. Reducing $\epsilon$ if
necessary, we can and do assume that $\gamma$ intersects both the
lines $\{\im z=\pm \epsilon\}$. We prove
\begin{lemma}
  \label{le:3}
  In $V$ (with $\epsilon$ reduced somewhat if necessary), on $\gamma$
  and to the right of $\gamma$, one has
  \begin{equation}
    \label{est:b}
    b(z)=o(1)\; \frac{n_0\;e^{\frac{i}h\int_{0}^{x_+} p_{up}(z)\,dz-
        \frac{i}h\int_{x_+}^{x_-} p_{up}(z)\,dz}}{1-e^{2\pi i z / h}},
    \quad h\to 0, 
  \end{equation}
  where $o(1)$ is analytic in $z$.
\end{lemma}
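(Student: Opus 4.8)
The plan is to repeat the strategy used for the coefficient $a$ in Lemma~\ref{le:2}: compute the asymptotics of the Wronskian $w_+(z)=w(\psi_+(z),\psi(z))$ away from $\R$, exploit the $h$-periodicity of $b$ together with the known location of its poles, and conclude via the maximum principle. By~\eqref{eq:periodic-coef}, $b(z)=w_+(z)/w(\psi_+(z),\psi_-(z))$, and the denominator is already controlled by~\eqref{eq:w_psi_pm}. So everything reduces to estimating $w_+$.

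First I would compute $w_+$ off the real line. For $z\in V\cap\C_+$ one has $p_{up}=p$ and $\sqrt{\sin p_{up}}=\sqrt{\sin p}$; plugging the standard behavior of $\psi_+$ from~\eqref{eq:3} and of $\psi$ (which, by \Cref{le:psi-in-S-prime}, has the standard behavior~\eqref{st_beh} in $S'$, hence in $V\cap\C_+$) into the definition~\eqref{eq:wronskian} of the Wronskian, the two leading exponential factors combine and one finds
\begin{equation*}
  w_+(z)=o\!\left(n_0\,e^{\frac ih\int_{x_+}^{x_-}p_{up}(z)\,dz}\,
  e^{\frac{2i}h\int_{0}^{z}p_{up}(\zeta)\,d\zeta-\frac{2i}h\int_{0}^{x_+}p_{up}(z)\,dz}\right),
  \quad z\in V\cap\C_+,
\end{equation*}
the point being that on the line $\gamma$ and to its right the real part of $\frac ih\int_{x_+}^z p_{up}$ is $\ge 0$, so the extra exponential is $O(1)$ there; one must be a little careful to write the estimate in the scale-invariant form above so that $h$-periodicity can be invoked. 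For $z\in V\cap\C_-$ one uses~\eqref{eq:p-p-d} exactly as in the derivation of~\eqref{eq:5}: $\psi$ acquires the factor $-e^{-2\pi i z/h}$ relative to its $\C_+$ formula, which is precisely what produces the $(1-e^{2\pi i z/h})^{-1}$ in~\eqref{est:b}. Combining the two half-plane estimates with the description of the (simple) poles of $w_+$ at $h\N$, the function
\begin{equation*}
  z\mapsto n_0^{-1}(1-e^{2\pi i z/h})\,
  e^{-\frac ih\int_{x_+}^{x_-}p_{up}(z)\,dz+\frac{2i}h\int_{0}^{x_+}p_{up}(z)\,dz}\,
  e^{-\frac{2i}h\int_{0}^{z}p_{up}(\zeta)\,d\zeta}\,w_+(z)
\end{equation*}
is analytic and $o(1)$ on both horizontal boundaries $|\im z|=\epsilon$ of the strip; passing to the variable $\zeta=e^{2\pi i z/h}$ and applying the maximum principle in the annulus gives the $o(1)$ bound uniformly in $V$. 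Dividing by~\eqref{eq:w_psi_pm} then yields~\eqref{est:b} after the $\int_{0}^{z}p_{up}$ terms are absorbed into the $o(1)$ (on $\gamma$ and to its right they only help).

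The main obstacle I expect is bookkeeping the exponential weights so that the estimate is genuinely $h$-periodic and so that the "$o(1)$" survives the division by the Wronskian~\eqref{eq:w_psi_pm}: one must phrase the off-$\R$ bound on $w_+$ with exactly the exponential prefactor that makes the resulting function of $\zeta=e^{2\pi i z/h}$ both single-valued on the annulus and small on its boundary, and one must check that the term $e^{\frac{2i}h\int_0^z p_{up}}$ is harmless precisely on $\gamma$ and to its right (this is where the definition of $\gamma$ via $\im\int_{x_+}^z p=0$ and the non-vanishing of $\im p$ in $P(\delta)$ are used). Once the weights are chosen correctly, the maximum-principle step is identical to the one in the proof of \Cref{le:2}.
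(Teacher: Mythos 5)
Your overall strategy matches the paper's: estimate the Wronskian $w_+=w(\psi_+,\psi)$ separately in each half-plane, recognize that the computation in $\C_-$ produces the extra factor $e^{-2\pi i z/h}$ (whence the $(1-e^{2\pi i z/h})^{-1}$ in~\eqref{est:b}), and finish with $h$-periodicity and the maximum principle in the annulus $\zeta=e^{2\pi iz/h}$. But the crucial step is executed incorrectly, in two related ways. First, a sign is backwards: since $\im p_{up}<0$, the modulus of $e^{\frac{2i}{h}\int_{x_+}^{z}p_{up}(\zeta)\,d\zeta}$ equals $1$ on $\gamma$ and \emph{increases} as $z$ moves to the right, so this factor is exponentially \emph{large} to the right of $\gamma$ (it is $\le 1$ on $\gamma$ and to its left); your claims that it is $O(1)$ there and that, at the end, the $\int_0^z p_{up}$ terms ``only help'' on $\gamma$ and to its right are false, and false precisely in the region where the lemma is needed. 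Second, the function you feed to the maximum principle retains the factor $e^{-\frac{2i}{h}\int_0^z p_{up}(\zeta)\,d\zeta}$, which is not $h$-periodic; it therefore does not descend to a single-valued function of $\zeta=e^{2\pi i z/h}$ on the annulus, and the annulus argument cannot be applied to it. Nor can you run the maximum principle directly in the rectangle, since you control your function only on the horizontal sides $\im z=\pm\epsilon$ and have no bound on the vertical sides.

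The missing idea is to convert the $z$-dependent bound $w_+=o(1)\,n_0\,e^{\frac ih\int_0^{x_+}p_{up}\,dz+\frac{2i}{h}\int_{x_+}^{z}p_{up}(\zeta)\,d\zeta}$ into a \emph{constant} bound before invoking periodicity: for $z$ between $\gamma$ and $\gamma+h$ (or on one of them) one has $\bigl|e^{\frac{2i}{h}\int_{x_+}^{z}p_{up}}\bigr|=\bigl|e^{\frac{2i}{h}\int_{\tilde z}^{z}p_{up}}\bigr|\le e^{C}$, where $\tilde z\in\gamma$, $\im\tilde z=\im z$, and $C$ is independent of $h$ because the integration interval has length at most $h$. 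This yields $w_+=o(1)\,n_0\,e^{\frac ih\int_0^{x_+}p_{up}\,dz}$ on a full horizontal period of the line $\im z=\epsilon$, and the $h$-periodicity of $w_+$ itself (not of the exponential) propagates this constant bound to the whole line; the same works on $\im z=-\epsilon$ with the extra $e^{-2\pi iz/h}$. Only then is $g(z)=(1-e^{2\pi iz/h})\,w_+(z)$ --- genuinely $h$-periodic and, by the pole structure, analytic in the strip --- amenable to the maximum principle in the annulus. Finally, note a bookkeeping slip: the prefactor in your $\C_+$ bound should be $n_0\,e^{\frac ih\int_{0}^{x_+}p_{up}\,dz}$, not $n_0\,e^{\frac ih\int_{x_+}^{x_-}p_{up}\,dz}$, since the product of the leading exponentials of $\psi$ and $\psi_+$ is $e^{\frac ih\int_0^{z}p_{up}+\frac ih\int_{x_+}^{z}p_{up}}=e^{\frac ih\int_0^{x_+}p_{up}+\frac{2i}{h}\int_{x_+}^{z}p_{up}}$; with your prefactor, division by the Wronskian~\eqref{eq:w_psi_pm} would not produce the constant appearing in~\eqref{est:b}.
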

\begin{proof}
  Let us estimate the Wronskian $w_+(z)=w(\psi_+(z),\psi(z))$, the
  numerator  in~\eqref{eq:periodic-coef}.\\
  First, we assume that $z\in V\cap\C_+$.  Then, $p_{up}$ and
  $\sqrt{\sin p_{up}} $ coincide respectively with $p$ and
  $\sqrt{\sin p}$.  Thus, the leading terms of the asymptotics of
  $\psi$ and $\psi_+$ coincide up to a constant factor; this yields
  \begin{equation}
    \label{b:w-up:1}
    w_+=o(1)\,n_0\;e^{\frac{i}h\int_{0}^{x_+}p_{up}(z)\,dz+\frac{2i}h
      \int_{x_+}^{z}p_{up}(\zeta)\,d\zeta},\qquad  h\to 0.
  \end{equation}
  We recall that the Wronskians are $h$-periodic (see
  \cref{s:space-of-sol}). Let us assume additionally that $z$ is
  either between $\gamma$ and $\gamma+h$ or on one of these curves.
  Pick $\tilde z\in\gamma$ such that $\im \tilde z=\im z$.  In view of
  the definition of $\gamma$, as $p_{up}$ is analytic in $P(\delta)$,
  one has
  \begin{equation*}
    \left|e^{\frac{2i}h\int_{x_+}^{z}p_{up}(z)\,dz}\right|= 
    \left|e^{\frac{2i}h\int_{\tilde
          z}^{z}p_{up}(\zeta)\,d\zeta}\right|\le e^C,
    \quad h\to0;
  \end{equation*}
  here, $C$ is a positive constant independent of $h$.  This estimate
  and~\eqref{b:w-up:1} imply that, for $z\in V\cap \C_+$ either between
  the curves $\gamma$ and $\gamma+h$ or on one of them, one has
  \begin{equation}
    \label{b:w-up:2}
    w_+=o(1)\,n_0\;e^{\frac{i}h\int_{0}^{x_+}p_{up}(z)\,dz},\qquad h\to 0.
  \end{equation}
  Reducing $\epsilon$ somewhat if necessary, we can and do assume
  that~\eqref{b:w-up:2} holds on the line $\im z=\epsilon$ between
  $\gamma$ and $\gamma+h$ or on one of these curves.  Then, thanks to
  the $h$-periodicity of $w_+$, it holds for all $z$ on the line
  $\{\im z=\epsilon\}$.\\
  Now, we assume that $z\in V\cap\C_-$. Using the asymptotics~\eqref{eq:5}
  and~\eqref{eq:3}, we compute
  \begin{equation*}
    w_+(z)=o(1)\,n_0\;e^{-2\pi i z/h}\;
    e^{\frac{i}h\int_{0}^{x_+}p_{up}(z)\,dz+
      \frac{2i}h\int_{x_+}^{z}p_{up}(\zeta)\,d\zeta}
    \qquad h\to 0.
  \end{equation*}
  Arguing as when proving~\eqref{b:w-up:2}, we finally obtain
  \begin{equation}\label{b:w-down}
    w_+(z)=o(1)\,n_0\;e^{-2\pi i
      z/h}\;e^{\frac{i}h\int_{0}^{x_+}p_{up}(z)\,dz},  
    \quad \im z=-\epsilon,\quad h\to 0.
  \end{equation}
  Let $g\,:\,z\mapsto (1-e^{2\pi iz/h}) w_+(z)$.
  Estimates~\eqref{b:w-up:2} and~\eqref{b:w-down} imply that, for
  $|\im z|=\epsilon$, 
  \begin{equation}
    \label{w-plus:est}
    n_0^{-1}\,e^{-\frac{i}h\int_{0}^{x_+}p_{up}(z)\,dz}g(z)=o(1),\quad
    h\to 0.
  \end{equation}
  As those of the solution $\psi$ do, the poles of $w_+$ in
  $P(\delta)$ belong to the set $h\N$ and are simple. So, the function
  $g$ is analytic in the strip $\{|\im z|\leq\epsilon\}$. Moreover, it
  is $h$-periodic as $w_+$ is.  Thus, the maximum principle implies
  that~\eqref{w-plus:est} holds in the whole strip
  $\{|\im z|\le \epsilon\}$. Estimate~\eqref{w-plus:est},
  representation~\eqref{eq:w_psi_pm} and the definition of $b$
  (see~\eqref{eq:periodic-coef}) yield~\eqref{est:b}. This completes
  the proof of Lemma~\ref{le:3}.
\end{proof}
\subsubsection{Completing the proof of~\Cref{pro:in-out}}
\label{sec:compl-proof-crefpr}
Let $V_*\subset V$ be a disk independent of $h$, centered at $z_*$ and
located to the right of $\gamma$ (i.e., such that, for any $z\in V_*$,
there exists $\tilde z\in \gamma$ such that $\re \tilde z<\re z$ and
$\im \tilde z=\im z$).
\\
Below we assume that $z\in V_*$.
\\
Using~\eqref{eq:4} and~\eqref{eq:3}, the asymptotic representations
for $\psi_\pm$, and~\eqref{as:a} and~\eqref{est:b}, the
representations for $a$ and $b$, we compute
\begin{equation*}
  \frac{b(z)\,\psi_-(z)}{a(z)\psi_+(z)}=o(1)
  e^{-\frac{2i}h\int_{x_+}^{z}p_{up}(\zeta)d\zeta},\quad h\to 0.
\end{equation*}
As before, let $\tilde z\in\gamma$ be such that $\im\tilde z=\im z$.
Then, we have
\begin{equation*}
  \left|e^{-\frac{2i}h\int_{x_+}^{z}p_{up}(\zeta)d\zeta}\right|=
  \left|e^{-\frac{2i}h\int_{\tilde
        z}^{z}p_{up}(\zeta)d\zeta}\right|\le 1.
\end{equation*}
Here, we used the definition of $\gamma$ and the fact that
$\im p_{up}<0$ in $V$.  As a result, we have
$\frac{b(z)\,\psi_-(z)}{a(z)\psi_+(z)}=o(1)$.
Formula~\eqref{eq:three-solutions} then yields
\begin{equation*}
  \psi(z)=a(z)\,\psi_+(z)\left(1+\frac{b(z)\,\psi_-(z)}
    {a(z)\psi_+(z)}\right)=a(z)\,\psi_+(z)(1+o(1)).
\end{equation*}
where $o(1)$ is analytic in $z\in V_*$. This and the asymptotic
representations for $\psi_+$ and $a$ yields~\eqref{eq:psi-factor} in
$V_*$.   This completes the proof of \Cref{pro:in-out}.
\section{Global asymptotics}
\label{s:uniform-as}
\subsection{The proof of \Cref{th:main}}
\label{sec:proof-crefth:main}
We follow the plan outlined in \cref{ss:plan}. Recall that $G_0$ is
defined in~\eqref{eq:G0}. We prove
\begin{proposition}
  \label{pro:on-the-circle}
  Let $\delta>0$ be sufficiently small.  In $S$, outside the
  $\delta$-neighborhood of 0, one has
  \begin{equation}
    \label{eq:on-the-circle}
    \psi(z)/\Gamma(1-z/h)= n_0\,G_0(z)\,(1+o(1)),\quad h\to 0.
  \end{equation}
\end{proposition}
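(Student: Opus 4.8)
The plan is to assemble the asymptotics of $\psi$ already established outside a neighborhood of $\R_+$ (via \Cref{le:psi-in-S-prime}) and near $\R_+^*$ (via \Cref{pro:in-out}) into the single formula~\eqref{eq:on-the-circle}, using the Stirling-type asymptotics~\eqref{eq:stirling} of the $\Gamma$-function to convert the explicit $\Gamma(1-z/h)$ factor (or, near $\R_+$, the periodic factor $1-e^{2\pi iz/h}$ combined with~\eqref{eq:two-gammas}) into the exponential-and-prefactor shape of $n_0 G_0(z)$. Since the target equality is an equality of analytic functions on a connected set, it suffices to prove it on two overlapping regions whose union is $S$ minus the $\delta$-neighborhood of $0$: first the part of $S'$ outside that neighborhood, and second the part of $P(\delta')$ (for a slightly larger $\delta'$) outside that neighborhood, the two overlapping in $S'\cap\C_\pm$ near $\R_+^*$.

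First I would treat $z$ in $S'$, away from $0$. There, by \Cref{le:psi-in-S-prime}, $\psi$ has the standard behavior~\eqref{st_beh}, i.e. $\psi(z)=\frac{1}{\sqrt{\sin p(z)}}e^{\frac ih\int_{z_0}^z p\,d\zeta+o(1)}$. I split the exponent as $\frac ih\int_{z_0}^0 p\,d\zeta+\frac ih\int_0^z p\,d\zeta$; the first piece is exactly $\log n_0$. For the second piece I insert $p(\zeta)=(p(\zeta)-i\ln(-\zeta))+i\ln(-\zeta)$ and use $\int_0^z \ln(-\zeta)\,d\zeta=z\ln(-z)-z$ (valid with the chosen branch of $\ln(-\cdot)$ on $\C\setminus\R_+$), so that
\begin{equation*}
  \frac ih\int_0^z p\,d\zeta=\frac ih\int_0^z(p(\zeta)-i\ln(-\zeta))\,d\zeta+\frac1h\bigl(z\ln(-z)-z\bigr).
\end{equation*}
Comparing with~\eqref{eq:G0}, the quantity $\frac{\sqrt{h/2\pi}}{\sqrt{-z\sin p(z)}}e^{\frac zh\ln\frac1h+\frac ih\int_0^z(p-i\ln(-\zeta))\,d\zeta}$ equals $G_0(z)$, so that
\begin{equation*}
  \psi(z)=n_0\,G_0(z)\,\frac{1}{\sqrt{h/2\pi}}\,\sqrt{-z}\;e^{\frac zh\ln h-\frac zh+\frac zh\ln(-z)+o(1)}
  =n_0\,G_0(z)\,\frac{\sqrt{-z/h}}{\sqrt{1/2\pi}}\,e^{\frac zh(\ln(-z/h)-1)+o(1)}.
\end{equation*}
Setting $\zeta=-z/h$, the right-hand factor is exactly $\sqrt{2\pi\zeta}\,e^{\zeta(\ln\zeta-1)+o(1)}=\Gamma(1+\zeta)(1+o(1))=\Gamma(1-z/h)(1+o(1))$ by~\eqref{eq:stirling}, which applies uniformly here because $z\in S'$ bounded away from $0$ forces $\arg(-z/h)$ to stay in a sector $|\arg\zeta|\le\pi-\epsilon$. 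Dividing gives~\eqref{eq:on-the-circle} on this region.

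Next I would handle $z$ near $\R_+^*$, away from $0$: this is the region $P(\delta')\setminus B(0,\delta)$. Here \Cref{pro:in-out} gives $(1-e^{2\pi iz/h})\psi(z)=n_0\frac{1}{\sqrt{\sin p_{up}(z)}}e^{\frac ih\int_0^z p_{up}\,d\zeta+o(1)}$. I would rewrite $1-e^{2\pi iz/h}=-e^{2\pi iz/h}(1-e^{-2\pi iz/h})$ and observe that, since $\re z>0$, $e^{-2\pi iz/h}\to0$ away from $\R$ while on $\R_+^*$ this term survives; in all cases one has $1-e^{2\pi iz/h}=-e^{2\pi iz/h}(1+o(1))$ only off the real axis, so I instead keep the factor and use the reflection formula. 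Precisely, by~\eqref{eq:two-gammas}, $\Gamma(1-z/h)=\frac{\pi}{\sin(\pi z/h)}\frac{1}{\Gamma(z/h)}$, and $\frac{\pi}{\sin(\pi z/h)}=\frac{2\pi i\,e^{i\pi z/h}}{e^{2\pi iz/h}-1}=\frac{-2\pi i\,e^{i\pi z/h}}{1-e^{2\pi iz/h}}$; applying~\eqref{eq:stirling} to $\Gamma(z/h)=\Gamma(1+(z/h-1))$ in the sector $|\arg(z/h)|\le\pi-\epsilon$ (which holds for $z$ near $\R_+^*$) turns $\frac1{\Gamma(z/h)}$ into $\frac{1}{\sqrt{2\pi z/h}}e^{-\frac zh(\ln(z/h)-1)+o(1)}$. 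Thus $(1-e^{2\pi iz/h})\Gamma(1-z/h)=-2\pi i\,e^{i\pi z/h}\frac{1}{\sqrt{2\pi z/h}}e^{-\frac zh(\ln(z/h)-1)+o(1)}$, and combining with the relations $p_{up}=p+2\pi$ valid in $\C_-$ (from \Cref{le:p_up-down}) — or rather tracking which branch of $\ln(-\cdot)$ versus $\ln(\cdot)$ and which branch of $\sqrt{\sin p_{up}}$ is in play — one matches the right-hand side of~\eqref{eq:psi-factor} divided by $(1-e^{2\pi iz/h})\Gamma(1-z/h)$ against $n_0 G_0(z)$. The key bookkeeping is that $i\ln(z)$ (from the $P(\delta)$ branch of $p_{up}$, via an analogue of~\eqref{eq:p-as-0}) versus $i\ln(-z)$ (the branch in $G_0$) differ by $i(\pm i\pi)=\mp\pi$, which is exactly absorbed by the $e^{i\pi z/h}$ and the $2\pi$ shift in $p_{up}$; and the sign $-2\pi i$ together with $\sqrt{\sin p_{up}}=-\sqrt{\sin p}$ and the $\sqrt{2\pi}$ versus $\sqrt{h/2\pi}$ factors reconcile. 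After this computation~\eqref{eq:on-the-circle} holds on $P(\delta')\setminus B(0,\delta)$ as well.

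Finally, the two regions overlap in a nonempty open set (e.g. in $S'\cap\C_+$ with $\re z>0$ and $\delta<|z|$), where both computations yield~\eqref{eq:on-the-circle}; since $\psi/\Gamma(1-\cdot/h)$ and $n_0G_0$ are both analytic on the connected set $S\setminus B(0,\delta)$ and the $o(1)$ error is locally uniform, the identity~\eqref{eq:on-the-circle} holds throughout $S$ outside the $\delta$-neighborhood of $0$. \textbf{The main obstacle} I anticipate is not any single hard estimate but the careful tracking of branches and constant prefactors in the second region: getting the branch of the logarithm in $p_{up}$ (which lives over $P(\delta)$, where $\ln z$ is natural) to match the branch of $\ln(-z)$ built into $G_0$ and $p$ on $S'$, and making sure the factors $-2\pi i$, $e^{i\pi z/h}$, $\sqrt{\sin p_{up}}=-\sqrt{\sin p}$, and $\sqrt{h/2\pi}$ versus $\sqrt{2\pi z/h}$ all cancel exactly. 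This requires applying \Cref{le:p_up-down} and an analogue of \Cref{le:p-near-zero} for the $P(\delta)$-branch, and being scrupulous about the domains of validity of~\eqref{eq:stirling} near the ray $\R_+$.
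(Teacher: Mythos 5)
Your proposal follows essentially the same route as the paper's proof: the asymptotics are established separately away from $\R_+$ (Stirling's formula~\eqref{eq:stirling} applied to $\Gamma(1-z/h)$ together with \Cref{le:psi-in-S-prime}) and near $\R_+^*$ (the reflection formula~\eqref{eq:two-gammas}, Stirling for $\Gamma(z/h)$, and \Cref{pro:in-out}), with the branch bookkeeping between $\ln z$ and $\ln(-z)$ resolved by matching in $\C_+$ and analytic continuation. Only minor sign slips appear in your intermediate display for the first region (the correct exponent is $\frac zh\ln h+\frac zh-\frac zh\ln(-z)$, which matches $e^{\zeta(\ln\zeta-1)}$ with $\zeta=-z/h$), but the method and conclusion are correct.
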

\noindent Let us check that \Cref{th:main} follows from Proposition~\ref{pro:on-the-circle}.\\
We recall that the poles of $\psi$ belong to $h\Z$ and are
simple.  Furthermore, in view of \Cref{le:p-ln}, $G_0$ is analytic in $S$.
Clearly, $G_0$ has no zeros in $S$. These observations imply that the function
\begin{equation*}
  f\,:\,z\ \longrightarrow \ \frac{\psi(z)}{n_0\,G_0(z)\,\Gamma(1-z/h)}-1
\end{equation*}
is analytic in $S$. By \Cref{pro:on-the-circle}, it satisfies the
estimate $f(z)=o(1)$ in $S$ outside the $\delta$-neighborhood of
0. Therefore, by the maximum principle, it satisfies this estimate in
the whole of $S$. This implies the statement of \Cref{th:main}.  To
complete the proof of this theorem, it now suffices to check  \Cref{pro:on-the-circle}.
\subsection{The proof of \Cref{pro:on-the-circle}}
\label{sec:proof-crefpr-circle}
Fix $\varepsilon$ sufficiently small positive. The proof of
\Cref{pro:on-the-circle} consists of two parts: first, we
prove~\eqref{eq:on-the-circle} in the sector
$S_\pi=\{z\in S\,:\, |z|\ge \delta,\ |\arg z-\pi|\le \pi-\varepsilon\}$,
and, then, in the sector
$S_0=\{z\in S\,:\, |z|\ge \delta,\ |\arg z|\le \varepsilon\}$.
\subsubsection{The asymptotic in the sector $S_\pi$}
\label{sec:asympt-sect-s_pi}
If $z\in S_\pi$ and $h\to 0$, we can use formula~\eqref{eq:stirling} for
$\Gamma(1-z/h)$ and the standard asymptotic
representation~\eqref{standard_asymptotic} for $\psi$, see
\Cref{le:psi-in-S-prime}. This immediately
yields~\eqref{eq:on-the-circle} in $S_\pi$.
\subsubsection{The asymptotic in the sector $S_0$}
\label{sec:asympt-sect-s_0}
Let $\varepsilon$ be so small that $S_0$ be a subset of $P(\delta)$
(defined just above \Cref{pro:in-out}). For $z\in S_0$ and $h$ small, we
express $\Gamma(1-z/h)$ in terms of $\Gamma(z/h)$ by
formula~\eqref{eq:two-gammas}, then, we use
formula~\eqref{eq:stirling} for $\Gamma(z/h)$ and the standard
asymptotic representation for $(1-e^{2\pi i/z})\psi(z)$ (see
\Cref{pro:in-out}). This yields
\begin{equation}
  \label{eq:psi-sur-gamma}
  \psi(z)/\Gamma(1-z/h)=n_0\,\tilde G_0(z) \,(1+o(1)),\quad h\to 0,\\ 
\end{equation}
where
\begin{equation}
  \label{eq:tildeG0}
  \tilde G_0(z)=\frac{i\;\sqrt{h/2\pi}}{\sqrt{z\sin p_{up}(z)}} \,  
  e^{\;\textstyle \frac{z}{h}\ln\frac1h+
    \frac{i}h\int_0^z(p_{up}(\zeta)-i(\ln(\zeta)-i\pi))\,d\zeta},
\end{equation}
and the functions $z\mapsto \ln z$ and $z\mapsto \sqrt{z}$ are
analytic in $\C\setminus\R_-$ and positive respectively if $z>1$ and $z>0$.  We
note that by \Cref{le:p-ln}, $\tilde G_0$ is analytic in $S_0$.\\
Define the functions $z\mapsto \sqrt{-z}$ and $z\mapsto \ln (-z)$ as
in~\eqref{eq:G0}, i.e. so that they be analytic in $\C\setminus\R_+$
and positive if $z<0$ and $z<-1$ respectively.  Then, these functions
are related to the functions $z\mapsto \ln z$ and $z\mapsto \sqrt{z}$
from~\eqref{eq:tildeG0} by the formulas
\begin{equation*}
  \sqrt{-z}=-i\sqrt{z},\quad \ln(-z)=\ln z-i\pi,\quad z\in \C_+.
\end{equation*}
Furthermore, in $S_0\cap \C_+$ the functions $p$ and $p_{up}$
coincide. These two observations imply that one has $\tilde G_0=G_0$
in $S_0\cap C_+$.

As both $G_0$ and $\tilde G_0$ are analytic in $S_0$, they coincide in
the whole of $S_0$. This and~\eqref{eq:psi-sur-gamma} imply the
representation~\eqref{eq:on-the-circle} for $z\in S_0$.  This completes
the analysis in the sector $S_0$ and the proof of
\Cref{pro:on-the-circle}.
\section{A basis for the space of solutions}
\label{s:basis}
We finally discuss a basis of the space of solutions to~\cref{main}
that are meromorphic in $S$. First, we describe the two solutions
forming the basis and, second, we compute their Wronskian.
\subsection{First solution}
\label{sec:first-solution}
As the first solution, we take $f_+(z)=\psi(z)/n_0$.  It has the
standard behavior~\eqref{eq:f-plus} in $S'$ and simple poles at the
points of $h\N$.  We recall that the modulus of the exponential factor
in~\eqref{eq:f-plus} increases when $z$ moves to the right parallel to
$\R$.
\subsection{Second solution}
\label{sec:second-solution}
Let $z_1>0$ be a point in $S$. Mutatis mutandis, in the way we
constructed $\psi$, in $S$ (possibly reduced somewhat), we construct a
solution $\phi$ in $S\setminus \R_-$ that has the standard asymptotic
behavior~\eqref{eq:phi} and such that the quasi-momenta $p$ (and the
functions $\sqrt{\sin p}$) in~\eqref{eq:f-plus} and~\eqref{eq:phi}
coincide in $\C_+$.
\\
The modulus of the exponential from~(\ref{eq:phi}) increases when $z$
moves to the left parallel to $\R$.
\\
The solution $\phi$ has simple poles at the points of $-h\N$ and, in
$S$, it admits the asymptotic representation
\begin{gather}
  \label{as:phi}
  \phi(z)=n_1\Gamma(1+z/h)\,G_1(z)(1+o(1)),\quad 
  n_1=e^{\frac{i}h\int_{z_1}^0 p(z)\,dz},\quad h\to 0,\\
  \label{eq:F0}
  G_1(z)=\frac{\sqrt{h/2\pi}}{\sqrt{z\sin p(z)}} \, e^{\;\textstyle
    -\frac{z}{h}\ln\frac1h-
    \frac{i}h\int_0^z(p(\zeta)-i\ln(\zeta))\,d\zeta}.
\end{gather}
Here, the functions $z\mapsto \sqrt{z}$ and $z\mapsto\ln z$ are
analytic in $\C\setminus \R_-$ and positive, respectively, if $z>0$ and
$z>1$.  The factor $G_1$ is analytic in $S$.
\\
We define the second solution to be
$f_-(z)= 1/n_1\;(1-e^{2\pi i z/h})\,\phi (z)$. The solution $f_-$ is
analytic in $S$. It has simple zeros at the points $z\in h\N$ and at 0.
By means of~\eqref{as:phi}, one can easily check that, in $S'$, it has
the standard behavior~\eqref{eq:f-minus}.
\subsection{The Wronskian of the basis solutions}
\label{sec:wronsk-basis-solut}
Using the asymptotic representations for $f_{\pm}$ in $S'$, one easily
computes
\begin{equation}
  \label{as:wf}
  w(f_+(z), f_-(z))=2i+o(1),\quad h\to 0.
\end{equation}
As the Wronskian is $h$-periodic, this representation is valid in the
whole domain $S$.  We see that, for sufficiently small $h$, the
leading term of the Wronskian does not vanish, and, thus, $f_\pm$ form
a basis in the space of solutions to~\cref{main} in $S$ (possibly
reduced somewhat for~\eqref{as:wf} to be uniform).
\bibliographystyle{plain}
\end {document}